\theoremstyle{plain}
\newtheorem{theorem}{Theorem}[section]
\newtheorem{proposition}[theorem]{Proposition}
\newtheorem{lemma}[theorem]{Lemma}
\theoremstyle{definition}
\newtheorem{definition}[theorem]{Definition}
\theoremstyle{remark}
\newtheorem{remark}[theorem]{Remark}
\title{Local and adaptive mirror descents in extensive-form games}
\author{%
  C\^ome Fiegel\thanks{\texttt{come.fiegel@normalesup.org}}\\
  CREST, ENSAE, IP Paris, Paris, France \\
  \And
  Pierre M\'enard \\
    ENS Lyon, Lyon, France \\
  \And
  Tadashi Kozuno\\
  Omron Sinic X, Tokyo, Japan\\
  \And
  R\'emi Munos \\
  Deepmind, Paris, France \\
  \And
  Vianney Perchet \\
  CRITEO AI Lab, Paris, France\\
  \And
  Michal Valko \\
  Deepmind, Paris, France\\
}
\begin{document}

\maketitle

\begin{abstract}

We study how to learn $\epsilon$-optimal strategies in zero-sum imperfect information games (IIG) with \textit{trajectory feedback}. In this setting, players update their policies sequentially based on their observations over a fixed number of episodes, denoted by $T$. Existing procedures suffer from high variance due to the use of importance sampling over sequences of actions \citep{steinberger2020dream, mcaleer2022escher}. To reduce this variance, we consider a \textit{fixed sampling} approach, where players still update their policies over time, but with observations obtained through a given fixed sampling policy. Our approach is based on an adaptive Online Mirror Descent (OMD) algorithm that applies OMD locally to each information set, using individually decreasing learning rates and a \textit{regularized loss}. We show that this approach guarantees a convergence rate of $\tilde{\mathcal{O}}(T^{-1/2})$ with high probability and has a near-optimal dependence on the game parameters when applied with the best theoretical choices of learning rates and sampling policies. To achieve these results, we generalize the notion of OMD stabilization, allowing for time-varying regularization with convex increments.

\end{abstract}

\section{Introduction}

The extensive-form representation of a game \citep{osborne1994course} can be depicted as a tree whose nodes correspond to the game states. At each state, the players choose some available actions and, based on these choices, the game transitions to the next state among the current state's children.

In imperfect information games (IIGs), players may only have access to partial information about the current game state upon taking action. Therefore, the state space is partitioned for each player into multiple information sets, which consist of indistinguishable states from the player's perspective. With perfect recall \citep{kuhn1950extensive}, when players remember their previous moves, each space of information sets also has a tree structure. 

 We focus more specifically on zero-sum IIGs represented in an extensive form under the perfect recall assumption, where the gains of one player, conventionally called the max-player, are equal to the losses of his opponent, the min-player. The primary goal is to design an algorithm learning $\epsilon$-optimal strategies \citep{neumann1928zur}. To achieve this, one can use the self-play framework, where an agent controls both players for $T$ episodes. At the beginning of each episode, the agent prescribes a strategy for each player. The agent then observes the results and updates the players' strategies for the next episode based on the outcome of the game. After $T$ episodes, this protocol returns a guess of strategies with a small exploitability gap \citep{ponsen2011computing}. In this learning framework, the agent has very limited feedback, only observing the rewards along each trajectory, as opposed to richer feedback that would for example include all possible rewards and all transition probabilities, \citep{zinkevich2007regret,hoda2010smoothing,tammelin2014solving,kroer2015faster,burch2019revisiting} 
unrealistic in large games.

To deal with this learning framework, a well-studied approach is to unilaterally minimize the regret of each player during the interactions with the game, i.e. the difference between the cumulative gain the player would have obtained had he played the best fixed a posteriori policy and the cumulative gain obtained by following the sequence of policies. The key observation is that by minimizing the regret of both players, the average policies over the sequence of policies generated during the process converge toward optimal strategies at the rate of order $\cO(1/\sqrt{T})$ \citep{cesa-bianchi2006prediction, kozuno2021learning}. Regret minimizers such as \CFR-based algorithm or online mirror descent (OMD) \citep{hoda2010smoothing,kroer2015faster} can be used, leading to optimal rates (with respect to the game size) with the latter option \citep{bai2022nearoptimal, fiegel2023adapting}.

Since the agent only observes trajectories of the game, an importance sampling estimate \citep{auer2003nonstochastic} of gain (or loss) is fed to the regret minimizer. However, the estimate of this loss usually suffers from high variance due to two reasons. First, the same sequence of policies is used to minimize the regret and to collect trajectory, making the players strive to fulfill two competing goals: play a policy with small regret and play a policy leading to a small variance loss. Second, importance sampling is applied to sequences of actions, that have in large games a very small probability of being played, leading to empirically large importance sampling weights and ultimately inflating the variance of the gain estimates. 

To mitigate this issue, regularization and biasing the estimates can help \citep{kozuno2021learning,bai2020near}. However, the high variance of the gain estimates remains problematic with large games, for which the algorithms are generally coupled with function approximation \citep{steinberger2020dream,mcaleer2022escher}. For instance, neural networks are particularly susceptible to noise \citep{zhang2021understanding}. A natural question is thus whether it is possible to learn optimal strategies without relying on the importance-sampling over the sequence of actions.

To this aim, we consider a particular case of the self-play framework: the fixed policy sampling framework \citep{lanctot2009monte}. In this setting, a fixed policy is used to collect the trajectories of the game. Precisely, at each round, one player, let's say the min-player, follows the fixed sampling policy to play against the current policy of the max-player. The collected trajectory is then used to update the current policy of the min-player. In the next episode, the max-player will follow a sampling policy against the current policy of the min-player, and so on. The outcome sampling \MCCFR algorithm adopts this framework to update the two players' policy by regret minimization, feeding the \CFR algorithm with gain estimated via importance sampling \citep{lanctot2009monte,bai2020near,farina2021bandit}. 

Recently, \citet{mcaleer2022escher} proposed the \ESCHER algorithm that removes the need for importance sampling in this framework. In particular, as the \CFR algorithm is invariant by re-scaling of the gains and the weights of the sampling policy are fixed, \ESCHER can directly operate with the unweighted history cumulative gain \citet{bai2020near}. Unfortunately, it still requires access to an oracle that provides this history of cumulative gains at an arbitrary information set. 

Nonetheless, the insight of \citet{mcaleer2022escher} cannot be used directly for OMD-based algorithms as they are not scale-invariant. Furthermore, the OMD-based algorithms generally work at the global game level whereas \CFR-based algorithms work at the local level of the information set \citet{bai2020near}, making local adaptation to the problem easier.



\paragraph{Contributions}

We make the following main contributions:
\vspace{-0.2cm}
\begin{itemize}[itemsep=-2pt,leftmargin=6pt]

    \item We propose the \LocalOMD algorithm, in the fixed policy sampling framework, that allows adaptive learning rates and does not require importance-sampling over the sequence of actions but only for the current action. \LocalOMD is computationally efficient as it can be seen as a regret minimization procedure applied to a regularized loss, locally on each information set \citep{farina2019regret}.

    \item We prove that \LocalOMD, with an appropriate sampling policy and choice of learning rates, has a $\tcO\pa{H^3(\AX+\BY)/\epsilon^2}$\footnote{For algorithms with a probability at least $1-\delta$ of a correct output, the symbol $\tcO$ hides dependencies logarithmic in $\AX,\BY$ and $\delta$} near-optimal sample complexity for learning $\epsilon$-optimal strategies, where $H$ is the height of the tree, $\AX$ the total number of available actions for the min-player and $\BY$ the same quantity for the max-player. This sample complexity was also achieved in a similar setting by \texttt{BalancedCFR} \citep{bai2022nearoptimal}, but with a less natural procedure that updates the policy at one depth at a time.

    \item We also prove that \LocalOMD achieves a $\tcO\pa{1/\epsilon^2}$ sample complexity, ignoring the game and policy-dependent parameters, with any choice of positive fixed sampling policy.
    \item We generalize the dual-stabilization technique introduced by \citet{fang2020online} to analyze OMD with a time-varying regularization as long as the increments of the regularization are convex.

    \item We provide tabular experiments and observe that our algorithm obtains similar results as existing baselines, but requires fewer updates overall as only one of the two players' strategies is updated at each iteration.
    

\end{itemize}
\section{Settings and fixed sampling procedure}

\subsection{Extensive-form games and regret}

\paragraph{Game definition} We consider a finite zero-sum IIG game $(\cS,\cX,\cY,\cA,\cB,p,\ell)$ with perfect recall. Given two behavioral policies $\mu=(\mu(\cdot|x))_{x\in\cX}$ and $\nu=(\nu(\cdot|y))_{y\in\cY}$, one episode of such game proceeds as follows: An initial game state $s_1\sim p(\cdot|s_0)$ is first sampled in the set of states $\cS$ according to the transition function $p$, starting from the root $s_0$ of the tree. At depth $h$, the min- and max- players respectively observe the information set $x_h$ and $y_h$ associated with the current state $s_h$ in the spaces of information sets $\cX$ and $\cY$ (these spaces being two partitions of $\cS$), then simultaneously choose and execute actions $a_h\sim \mu(\cdot|x_h)$ and $b_h\sim\nu(\cdot|y_h)$ in the sets of legal actions $\cA(x_h)$ and $\cB(y_h)$. As a result, the state transitions to a new state $s_{h+1}\sim p(\cdot|s_h,a_h,b_h)$ in $\cS$, with  the min- and max- players getting respectively the losses $\ell_h\sim \ell(\cdot | s_h,a_h,b_h)$ in $[0,1]$ and $1-\ell_h$ according to the loss distribution $\ell$. This is repeated until a final state $s_H$ of a fixed depth $H$ is reached, after which the episode finishes.

\paragraph{Policies and actions} We will denote by $\maxpi$ and $\minpi$ the set of behavioral policies of the min- and max- players. Because of the perfect recall assumption, such policies, with an independent stochastic choice of action for each information set, are enough to describe the entire set of strategies \citep{larakibookgame}. We will also denote by $\AX$ and $\BY$ the total number of actions for respectively the min- and max- players, i.e.
\[\AX:=\sum_{x\in\cX}\,\abs{\cAx}\quad \textrm{and}\quad \BY=\sum_{y\in\cY}\,\abs{\cBy}\]

\paragraph{Regret and $\epsilon$-optimal strategies} We are interested in learning $\epsilon$-optimal policies through self-play over multiple episodes. A useful notion for this objective is the regret as explained in the introduction. We first define the value $V^{\mu,\nu}=\E^{\mu,\nu}[\sum_{h=1}^H \ell_h]$ as the expected sum of losses (for the min-player) with respect to a pair of policies $(\mu,\nu)\in\maxpi\times \minpi$. Given a sequence $(\mu^t,\nu^t)_{t\in[T]}$ in $\maxpi\times \minpi$, the regrets of the min- and max- players are then defined by
\[\regret^T_{\mathrm{min}}:= \max_{\mu^\dagger \in \maxpi} \sum_{t=1}^T \pp{V^{\mu^t, \nu^t} - V^{\mu^\dagger, \nu^t}}\quad\textrm{and}\quad
    \regret^T_{\mathrm{max}}:= \max_{\nu^\dagger \in \minpi} \sum_{t=1}^T \pp{V^{\mu^t, \nu^\dagger} - V^{\mu^t, \nu^t}}\,.\]
Minimizing the regret of both players leads to the computation of an $\epsilon$-optimal profile (equivalent to an $\epsilon$-Nash equilibrium for two players zero-sum games) through the computation of an average of the policies. The following theorem quantifies this statement under the perfect recall assumption.

\begin{theorem} \label{thm:folklore}\citep{cesa-bianchi2006prediction,kozuno2021learning}
    From a sequence $(\mu^t,\nu^t)_{t\in [T]}$ in $\maxpi\times \minpi$ define the time-averaged profile $(\overline{\mu},\overline{\nu})$, then $(\overline{\mu},\overline{\nu})$ is $\epsilon$-optimal with
    \[\varepsilon=\pa{\regret^T_{\mathrm{min}}+\regret^T_{\mathrm{max}}}/T\,.\]
\end{theorem}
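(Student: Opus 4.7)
The plan is to reduce the claim to the standard online-to-batch / minimax bilinearity argument, with the subtlety that the value $V^{\mu,\nu}$ is generally not linear in the behavioral policies themselves, only in their sequence-form (realization-plan) representation. Perfect recall is exactly the hypothesis that lets us pass between the two via Kuhn's theorem, so $(\bar\mu,\bar\nu)$ should be understood as the time-average in sequence-form (equivalently, the realization-weighted average of behavioral conditionals). Once this is set up, for every terminal history $z$ one has $\pi^{\bar\mu}(z)=\frac{1}{T}\sum_{t}\pi^{\mu^t}(z)$ and similarly for $\bar\nu$, so that, viewing $V$ as a bilinear form on realization plans,
\[
V^{\bar\mu,\nu}=\frac{1}{T}\sum_{t=1}^T V^{\mu^t,\nu}\quad\text{for any fixed } \nu,\qquad V^{\mu,\bar\nu}=\frac{1}{T}\sum_{t=1}^T V^{\mu,\nu^t}\quad\text{for any fixed } \mu.
\]

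With this linearity in hand, I would fix arbitrary deviations $\mu^\dagger\in\maxpi$ and $\nu^\dagger\in\minpi$ and decompose
\[
V^{\bar\mu,\nu^\dagger}-V^{\mu^\dagger,\bar\nu}
=\frac{1}{T}\sum_{t=1}^T\bigl(V^{\mu^t,\nu^\dagger}-V^{\mu^t,\nu^t}\bigr)+\frac{1}{T}\sum_{t=1}^T\bigl(V^{\mu^t,\nu^t}-V^{\mu^\dagger,\nu^t}\bigr).
\]
The first sum is bounded above by $\regret^T_{\mathrm{max}}/T$ once I take the supremum over $\nu^\dagger$, and the second by $\regret^T_{\mathrm{min}}/T$ once I take the supremum over $\mu^\dagger$, by definition of the two regrets. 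Therefore
\[
\max_{\nu^\dagger\in\minpi} V^{\bar\mu,\nu^\dagger}-\min_{\mu^\dagger\in\maxpi} V^{\mu^\dagger,\bar\nu}\;\le\;\frac{\regret^T_{\mathrm{min}}+\regret^T_{\mathrm{max}}}{T}.
\]

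To finish, I would recall that for any pair $(\mu,\nu)$ we have the sandwich $\min_{\mu^\dagger}V^{\mu^\dagger,\nu}\le V^{\mu,\nu}\le \max_{\nu^\dagger}V^{\mu,\nu^\dagger}$, so the displayed quantity is exactly the exploitability (NashConv) of $(\bar\mu,\bar\nu)$, which is the definition of $\varepsilon$-optimality used in the paper. The main obstacle is really the first step: making rigorous the meaning of the time average so that $V$ behaves linearly in each argument. I expect to handle this by stating explicitly that under perfect recall the realization plan $\pi^\mu$ is an affine function of the marginals of $\mu$, and that averaging in sequence-form corresponds to averaging the realization plans while preserving the behavioral interpretation — after which the rest of the argument is the classical online-learning-to-equilibrium conversion and requires no further work beyond what is written above.
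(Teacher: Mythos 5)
The paper does not actually prove Theorem~\ref{thm:folklore}; it is stated as a known result with citations and no argument is given anywhere in the text or appendices. Your proposal is correct and is precisely the standard online-to-batch argument that those references rely on: you rightly identify the one subtle point, namely that $(\overline{\mu},\overline{\nu})$ must be the average taken in sequence form (so that $V$, which the paper shows equals $\scal{\ell^{\nu}}{\mu_{1:}}$, is linear in each averaged argument), and under perfect recall Kuhn's theorem converts this back to a behavioral policy; the telescoping decomposition and the identification of the resulting bound with the exploitability gap $\max_{\nu}V^{\overline{\mu},\nu}-\min_{\mu}V^{\mu,\overline{\nu}}$ then match the paper's own notion of $\epsilon$-optimality as used in its experiments.
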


It especially shows that both averaged strategies converge to the set of optimal strategies as long as the regret of both players is sub-linear.

We now focus on the min-player point of view because of the symmetry of the game. Indeed, the following ideas will apply exactly the same way to the max-player, using the losses $1-\ell_h$ instead.

\paragraph{Perfect recall and realization plan} Thanks to the perfect recall assumption, for any information set $x\in\cX$ and action $a\in\cAx$, we know the existence of a unique depth $h\in[H]$ and history $(x_1,a_1,...,x_h,a_h)$ such that $x_h=x$ and $a_h=a$. Using this unique history, we define the realization plan $\mu^{}_{1:}\in\R^\AX$ \citep{VONSTENGEL1996220} associated to a policy $\mu\in\maxpi$ with, for any $x\in\cX$ and $a\in\cAx$:
\[\mu_{1:}^{}(x,a):=\Pi_{i=1}^h\ \mu(a_i|x_i)\, .\]
It denotes the combined probability of choosing actions that lead to $(x, a)$. We will especially define $Q_\mathrm{max}:=\{\mu^{}_{1:},\mu\in\maxpi\}$ the treeplex, i.e. the set of all possible realization plans. This set is a convex polytope of $\R^\AX$ as a set of solutions of linear equalities under positivity constraints.

\paragraph{Loss and regret linearization}

For $\nu$ a max-player policy, the unique history also leads to the definition of the adversarial transitions $p_{1:}^\nu \in \R^\cX$ and adversarial losses $\ell^\nu\in\R^\AX$ with:
\[
    p_{1:}^\nu(x) := p(x_1|s_0)\prod_{i=2}^h p^\nu(x_i|x_{i-1},a_{i-1})\quad\textrm{and}\quad
    \ell^\nu(x,a):= p^\nu_{1:}(x)\ell_h^\nu(x,a)\]
where $p(x_1|s_0)$ is the probability that $x_1$ is initially observed by the min-player, and, assuming that the max-player policy is set to $\nu$, $p^\nu(\cdot |(x_{i-1},a_{i-1}))$ denotes the probability of transitioning to $x_i$ when $(x_{i-1},a_{i-1})$ is reached, and $\ell_h^\nu$ the average loss $\ell_h$ associated to $a$ when $x$ is reached. Similarly to the realization plan, the adversarial transitions denote the combined probability of both Nature and max-player actions that lead to $x$, assuming that the min-player plays the actions $(a_1,...,a_{h-1})$.

Using a chain-rule argument, we get the relation, given a pair of policies $(\mu,\nu)\in\maxpi\times\minpi$,
\[V^{\mu,\nu}=\scal{\ell^\nu}{\mu_{1:}}\,,\]
where $\scal{\cdot}{\cdot}$ is the standard inner product of $\R^{\AX}$, defined by $\scal{z_1}{z_2}:=\sum_{x\in\cX}\sum_{a\in\cAx}z_1(x,a)z_2(x,a)$. The regret can then be rewritten
\[\regret^T_{\mathrm{min}}= \max_{\mu^\dagger \in \maxpi} \sum_{t=1}^T \scal{\ell^t}{\mu_{1:}^t-\mu_{1:}^\dagger}\qquad \textrm{where} \qquad\ell^t:=\ell^{\nu^t}\, ,\]
which effectively reduces the problem to a linear regret problem over the convex polytope $Q_{\mathrm{min}}$ of realization plans.

Several techniques exist to sequentially choose policies $(\mu^t)_{t\in [T]}$ minimizing $\regret^T_{\mathrm{min}}$, assuming that the losses $\ell^t$ are observed after each round $t$ \citep{hoda2010smoothing}. However, in the \textit{trajectory feedback} setting, these losses are not observed, and can only be estimated from the observation of the trajectories $(x^t_1,a^t_1,...,x^t_H,a^t_H)$ and partial losses $(\ell^t_1,...,\ell^t_H)$ of each round.

\subsection{Fixed sampling policy}\label{sec:fixed_sampling}

In the \textit{fixed sampling} framework \citep{lanctot2009monte}, both players always use the same policy for the observations of the trajectory. However, the two observations can not be done simultaneously with such an approach, as the learning would then be quite naive. The solution, summarized in Algorithm~\ref{alg:learning}, is for the two players to take turns between an observation phase, in which they play their fixed sampling policy $\mu^s$ or $\nu^s$, and an interaction phase, in which they play their updated policy $\mu^t$ or $\nu^t$. The underlying idea is that the observation phase lets each player observe how the game unfolds against the opponent in its interaction phase, playing its updated policy. Given upper-bounds of the regrets $\regret^T_{\mathrm{min}}$ and $\regret^T_{\mathrm{max}}$ associated to the sequence $(\mu^t,\nu^t)_{t\in[T]}$, the previous Theorem~\ref{thm:folklore} then characterizes the optimality of the outputted time-averaged profile $(\overline{\mu},\overline{\nu})$.

\begin{algorithm}[t] 
\caption{Learning procedures with fixed sampling policies for two players}
\label{alg:learning}
\begin{algorithmic}[1]
                \STATE \textbf{Input:}\\
                Fixed sampling policies $\mu^s$ and $\nu^s$\\
                Initial policies $\mu^1$ and $\nu^1$ and update procedure for each player\\
                \STATE \textbf{Output:}\\
                The time-averaged policies $\overline{\mu},\overline{\nu}$ of Theorem~\ref{thm:folklore}
                \STATE \textbf{Algorithm:}\\
                For $t=1$ to $T$\\
                ~~~~ The min-player observes the full outcome of an episode with the policies $(\mu^s,\nu^t)$\\
                ~~~~ The max-player observes the full outcome of an episode with the policies $(\mu^t,\nu^s)$\\
                ~~~~ The min- and max-player respectively update $\mu^{t+1}$ and $\nu^{t+1}$ based on their past observations
\end{algorithmic}
\end{algorithm}

This framework can remove the global importance sampling term of the loss, which reduces the variance to make them more suitable beyond the tabular setting \citep{mcaleer2022escher}. Furthermore, it allows more aggressive policies, as the observation side is handled by the sampling strategy. The immediate downside is that this sampling policy must be fixed in advance, which requires defining a good sampling policy beforehand (for example the balanced policy of Remark~\ref{rmk:kappa} below).

From now on, we again focus on the min-player for the same symmetry reasons. The next paragraph characterizes the efficiency of such sampling strategy.

\paragraph{Estimated regret}

 Based on the min-player observations, we define $\hat{\regret}^T_\textrm{min}$ the estimated regret by
\[\hat{\regret}^T_\textrm{min}:=\max_{\mu^\dagger\in\maxpi}\sum_{t=1}^T\scal{\hell^t}{\,\mu_{1:}^t-\mu_{1:}^\dagger}\]
where the $\hell^t$ are the importance-sampling estimated loss vectors, defined for each information set $x$ of depth $h$ and action $a\in\cAx$ by
\[\hell^t(x,a):=\frac{\indic{x=x_h^t,a=a_h^t}}{\mu_{1:}^s(x,a)}\ell_h^t\]
with $x_h^t$ the visited information set, $a_h^t$ the chosen action and $\ell_h^t$ the associated loss at depth $h$ of episode $t$.

The following theorem states that upper-bounding this estimated regret is enough to upper-bound the actual regret, up to an additional additive term. Its proof is given in Appendix~\ref{app:approximation} and relies on Bernstein-type inequalities.

\begin{theorem}\label{thm:estimation}
    Assume that the estimated losses are obtained with a fixed positive sampling policy $\mu^s$ as above. Then, for any sequence $(\mu^t)_{t\in [T]}$ of $\maxpi$ and any $\delta\in(0,1)$, the following bound holds with a probability at least $1-\delta$
    \[\regret^T_\textrm{min}\leq \max\left\{\hat{\regret}^T_\textrm{min},0\right\}+ 4\sqrt{ \iota H\kappa(\mu^s) T}\]
    where
    \[\iota:=\log\pa{\frac{\AX+1}{\delta}}\quad\textrm{and}\quad\kappa(\mu^s):=\max_{\mu\in\maxpi}\sum_{x\in\cX}\sum_{a\in\cA_x}\frac{\mu_{1:}(x,a)}{\mu_{1:}^s(x,a)}\, .\]
\end{theorem}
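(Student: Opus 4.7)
Fix $\mu^\star\in\maxpi$ attaining $\min_{\mu^\dagger}\sum_t\scal{\ell^t}{\mu_{1:}^\dagger}$, so that $\regret^T_\textrm{min} = \sum_t\scal{\ell^t}{\mu_{1:}^t-\mu_{1:}^\star}$ while $\hat{\regret}^T_\textrm{min}\geq \sum_t\scal{\hell^t}{\mu_{1:}^t-\mu_{1:}^\star}$ (by plugging $\mu^\dagger=\mu^\star$ into the estimated regret). Subtracting and using $\max\{\hat{\regret}^T_\textrm{min},0\}\geq \hat{\regret}^T_\textrm{min}$, it suffices to bound $A+C$ where
\[A := \sum_{t=1}^T \scal{\ell^t-\hell^t}{\mu_{1:}^t}\quad\text{and}\quad C := \sum_{t=1}^T \scal{\hell^t-\ell^t}{\mu_{1:}^\star}.\]
Setting $\xi^t(x,a):= \hell^t(x,a)-\ell^t(x,a)$, direct computation from the sampling law gives $\E[\xi^t(x,a)\mid\mathcal{F}_{t-1}]=0$, $\E[\xi^t(x,a)^2\mid \mathcal{F}_{t-1}]\leq p_{1:}^{\nu^t}(x)/\mu_{1:}^s(x,a)$, and $|\xi^t(x,a)|\leq 1/\mu_{1:}^s(x,a)$. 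The sum $A$ is a genuine martingale in $t$ because $\mu^t$ is $\mathcal{F}_{t-1}$-measurable, whereas $C$ is \emph{not}, since $\mu^\star$ depends on the entire loss sequence; the two pieces therefore require distinct treatments.

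To control $C$, I decompose per information-set-action pair: $C = \sum_{(x,a)} \mu_{1:}^\star(x,a) \sum_t \xi^t(x,a)$. For each \emph{fixed} $(x,a)$ the inner sum is a martingale, so I would apply a one-sided Bernstein/Freedman inequality with failure probability $\delta/(\AX+1)$ and take a union bound over the $\AX$ pairs to obtain, uniformly in $(x,a)$,
\[\sum_t \xi^t(x,a)\leq \sqrt{2\iota\, V_T(x,a)} + \iota/(3\mu_{1:}^s(x,a)),\qquad V_T(x,a):=\sum_t p_{1:}^{\nu^t}(x)/\mu_{1:}^s(x,a).\]
Multiplying by $\mu_{1:}^\star(x,a)\geq 0$ and summing, Cauchy--Schwarz on the $\sqrt{V_T}$ piece produces the cross term
\[\sum_{(x,a)} \mu_{1:}^\star(x,a)\sum_tp_{1:}^{\nu^t}(x) = \sum_t \sum_x \mu_{1:}^\star(x) p_{1:}^{\nu^t}(x) = TH,\]
the crucial identity being that at each depth $h$ the reach probabilities $\mu_{1:}^\star(x)p_{1:}^{\nu^t}(x)$ sum to one over $x$ in that depth. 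Combined with the definition of $\kappa(\mu^s)$, this yields $C \leq \sqrt{2\iota T H \kappa(\mu^s)} + \iota\kappa(\mu^s)/3$.

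For $A$ a single Bernstein inequality applied to the martingale $(A_t)_t$ with $A_t:=\scal{\ell^t-\hell^t}{\mu_{1:}^t}$ is enough. Its conditional variance is controlled by Cauchy--Schwarz on the $H$ non-zero entries of $\hell^t$ combined with $p_{1:}^{\nu^t},\mu_{1:}^t\leq 1$:
\[\E[A_t^2\mid\mathcal{F}_{t-1}]\leq \E[\scal{\hell^t}{\mu_{1:}^t}^2\mid\mathcal{F}_{t-1}]\leq H\sum_{(x,a)}\frac{p_{1:}^{\nu^t}(x)\mu_{1:}^t(x,a)^2}{\mu_{1:}^s(x,a)}\leq H\kappa(\mu^s).\]
The almost-sure bound on $|A_t|$ relies on the key path-sum-vs-treeplex-sum inequality
\[\sum_{h=1}^H \frac{\mu_{1:}^t(x_h^t,a_h^t)}{\mu_{1:}^s(x_h^t,a_h^t)} \leq \sum_{(x,a)}\frac{\mu_{1:}^t(x,a)}{\mu_{1:}^s(x,a)} \leq \kappa(\mu^s),\]
yielding $|A_t|\leq H+\kappa(\mu^s)$. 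A single Bernstein with failure probability $\delta/(\AX+1)$ then delivers $A\leq \sqrt{2\iota TH\kappa(\mu^s)} + (H+\kappa(\mu^s))\iota/3$.

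By union bound over these $\AX+1$ concentration events the total failure probability is at most $\delta$. The leading terms combine to $2\sqrt{2\iota TH\kappa(\mu^s)}\leq 4\sqrt{\iota TH\kappa(\mu^s)}$, while the residual additive $O(\iota\kappa(\mu^s)+\iota H)$ terms are absorbed into the leading constant whenever the bound is non-vacuous (otherwise the claim is trivial since $\regret^T_\textrm{min}\leq TH$). The main conceptual obstacle is the non-predictability of $\mu^\star$: isolating the $\mu^\star$-dependence in $C$ into purely per-$(x,a)$ martingales, then exploiting the depth-wise reach-probability identity $\sum_x \mu_{1:}^\star(x)p_{1:}^{\nu^t}(x)=1$ in the Cauchy--Schwarz step, is precisely what replaces the naive $\sqrt{\AX}$ (that a union bound over the exponentially many pure strategies would produce) with the correct $\sqrt{H\kappa(\mu^s)}$ scaling.
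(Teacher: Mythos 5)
Your proposal is correct and follows essentially the same route as the paper: the same decomposition into the comparator term $\sum_t\langle\hell^t-\ell^t,\mu_{1:}^\dagger\rangle$ (handled by per-$(x,a)$ martingale concentration and a union bound over the $\AX$ pairs) and the iterate term $\sum_t\langle\ell^t-\hell^t,\mu_{1:}^t\rangle$ (handled by a single Bernstein/Freedman bound with conditional variance $H\kappa(\mu^s)$ via Cauchy--Schwarz), each at confidence $\delta/(\AX+1)$. The only cosmetic differences are that the paper applies its Bernstein lemma to the unnormalized increments $\ell_h^t\indic{x=x_h^t,a=a_h^t}$ with one common tuning parameter $\gamma_1$ (rather than a self-normalized Freedman bound followed by Cauchy--Schwarz over $(x,a)$), and it disposes of the lower-order additive terms up front via the reduction to the case $HT\ge\iota\kappa(\mu^s)$, which you handle equivalently by the triviality of the bound otherwise.
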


\begin{remark}\label{rmk:kappa}
    The quantity $\kappa(\mu^s)$ can be efficiently computed recursively for each of the sub-trees induced by an information set $x\in\cX$, and we will denote by $\kappa(\mu^s|x)$ the associated quantities. The same recursion shows that the \textit{balanced policy} $\mu^\star$, which plays proportionally to the total number of actions of each sub-tree, minimizes all these local quantities and satisfies $\kappa(\mu^\star)=\AX$. The related computations are provided in Appendix~\ref{app:kappa}.
\end{remark}
\section{Adaptive Mirror Descent}

We shall now focus on the update procedure the min-player can use to minimize this estimated regret. Let us first define some important notions of convex optimization. 

\begin{definition}
    Let $\Omega\subset \R^n$ be a non-empty open convex, and $\overline{\Omega}$ be its closure. A function $\Psi:\overline{\Omega}\xrightarrow[]{}\R$ is said to be Legendre if $\Psi$ is strictly convex, continuously differentiable on $\Omega$ and 
    \[\forall{y\in \overline{\Omega}\backslash \Omega},\ \lim_{x\xrightarrow[]{} y} \norm{\nabla \Psi(x)}=+\infty\, .\]
    The Bregman divergence $\breg_{\Psi}: \overline{\Omega}\times \Omega\xrightarrow[]{}\R$ of a Legendre function $\Psi$ is
    \[\breg_{\Psi}(x,y):=\Psi(x)-\Psi(y)-\scal{\nabla\Psi(y)}{x-y}\, .\]
    Note that the Bregman divergence can be more generally defined for any convex function differentiable on $\Omega$, although some key properties are lost. The Fenchel conjugate $\Psi^\star:\R^n\xrightarrow[]{}\R\cup\left\{+\infty\right\}$ of $\Psi$ is defined by
    \[\Psi^\star(\xi)=\sup_{x\in\overline{\Omega}}\scal{\xi}{x}-\Psi(x)\]
\end{definition}

\subsection{OMD and dilated entropy}

In an extensive-form game with perfect recall, algorithms based on the Online Mirror Descent (OMD) typically compute at each time step $t$ the update
\begin{equation}
\label{eq:omd}\mu^{t+1}=\argmin_{\mu\in\maxpi} \scal{\hell^t}{\mu_{1:}}+\breg_\Psi(\mu^{}_{1:},\mu_{1:}^t) \tag{OMD}
\end{equation}
where $\hell^t$ is the estimated loss and $\Psi:Q_\mathrm{min}\xrightarrow[]{} \R$ a Legendre regularizer. The key step is then the choice of the regularizer.

\paragraph{Dilated entropy}

A common choice of regularizer is the dilated entropy \citep{hoda2010smoothing, kroer2015faster}. It requires for each $x\in\cX$  a Legendre regularizer $\Psi_x$ over a convex domain $\overline{\Omega_x}\subset \R^{\abs{\cAx}}_{\geq 0}$ that contains the simplex $\Delta_{\cAx}:=\left\{\mu,\ \sum_{a\in\cA(x)}\mu(a)=1 \right\}$. For a given list of positive weights $\alpha=(\alpha(x))_{x\in\cX}$, the dilated entropy $\Psi^{\mathrm{dil}}_{\alpha}$ satisfies for any $\mu\in\maxpi$:

\[\Psi^{\mathrm{dil}}_{\alpha}(\mu_{1:}):=\sum_{x\in\cX}\alpha(x)\mu_{1:}(x)\Psi_x\pa{\mu(\cdot | x)}\quad\textrm{where}\quad \mu_{1:}(x):=\sum_{a\in\cAx}\mu_{1:}(x,a)\: .\]
Using this dilated entropy as the regularizer, the OMD updates become
\[\mu^{t+1}=\argmin_{\mu\in\maxpi} \scal{\hell^t}{\mu_{1:}}+\breg^\mathrm{dil}_{\alpha}(\mu^{}_{1:},\mu_{1:}^t)\]
where $\breg^\mathrm{dil}_{\alpha}(\mu^{}_{1:},\mu_{1:}^t):=\sum_{x\in\cX}\alpha(x)\mu_{1:}(x)\breg_x(\mu^{}_{1:}(\cdot | x), \mu^{t}_{1:}(\cdot | x))$ and $(\breg_x)_{x\in\cX}$ are the individual Bregman divergences of the $(\Psi_x)_{x\in\cX}$. The benefits of this regularization are that it efficiently suits the structure of the game and that the associated updates are easily computed recursively, starting from the final states.

\subsection{Stabilized OMD algorithm}

The regularizer $\Psi$ sometimes needs to change over time. For example, when $T$ is unknown, a regularizer of the form $\Psi^t=\Psi/\eta^{t}$ is usually considered, with $\eta^t=t^{-1/2}$ the learning rate. \citet{fiegel2023adapting} gives another example of time-varying regularization, adapting the regularization to the game structure that is assumed to be initially unknown.

The previous updates \eqref{eq:omd} do not however allow adaptive regularization in general. In fact, even the simple learning rate decrease $\eta^{t+1}=t^{-1/2}$ can lead to a linear regret dependence with time \citep{orabona2018scale}.

In this part, we shall consider more generally a sequence of Legendre regularizers $(\Psi^t)_{t\in[T]}$ defined on a convex domain $\overline{\Omega}\subset\R^n$, and that the player chooses a sequence of primal iterates $(w^t)_{t\in[T]}$ (respectively the updated realization plans $(\mu^t_{1:})_{t\in[T]}$ of our settings) in a closed convex set $\mathcal{C}$ (respectively the treeplex $Q_{\mathrm{min}}$) included in $\overline{\Omega}$, according to a sequence of dual iterates $(\xi^t)_{t\in[T]}$ in $\R^{n}$ (respectively the estimated losses $(\hell^t)_{t\in[T]}$) observed sequentially.

\citet{fang2020online} proposed, in the non-increasing learning rates case $\Psi^{t+1}=\Psi/\eta^{t+1}$, to use a dual-stabilization to recover the classical OMD bounds. We noticed that their updates can be interpreted as
\begin{equation}
    w^{t+1}=\argmin_{w\in\mathcal{C}} \scal{\xi^t}{w}+\pa{{1}/{\eta^t}}\,\breg_{\Psi}\pa{w,w^t}+\pa{{1}/{\eta^{t+1}}-{1}/{\eta^t}}\breg_{\Psi}\pa{w,w^1}\, .\tag{DS-OMD}
\end{equation}
In the more general adaptive case, we demonstrate that these types of updates can be generalized to
\begin{equation}\label{eq:gsmd}
    w^{t+1}=\argmin_{w\in\mathcal{C}} \scal{\xi^t}{w}+\breg_{\Psi^t}\pa{w,w^t}+\breg_{\Psi^{t+1}-\Psi^t}\pa{w,w^1}\tag{GDS-OMD}
\end{equation}
in which the $\Psi^{t+1}-\Psi^t$ incremental functions are assumed to be convex (but not necessarily Legendre). The following theorem, proven in Appendix~\ref{app:gds} shows that classical OMD guarantees can be recovered with these updates.
\begin{theorem}\label{thm:general_entropy}
Let $(w^t)_{t\in[T]}$ be a sequence of primal iterates generated by the updates \eqref{eq:gsmd}, with convex incremental functions. Then for any $w^\dagger\in \overline{\Omega}$,
\[\sum_{t=1}^T\scal{\xi^t}{w^t-w^\dagger}\leq \vphantom{\sum_{t=1}^T}\breg_{\Psi^{T}}(w^\dagger,w^1)+\sum_{t=1}^T\breg_{\Psi^{t,\star}}\pa{\nabla\Psi^t(w^t)-\xi^t,\nabla\Psi^t(w^t)}\]

where the $\pa{\Psi^{t,\star}}_{t\in[T]}$ are the respective Fenchel conjugates of the $\pa{\Psi^t}_{t\in[T]}$.
\end{theorem}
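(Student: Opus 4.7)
My plan is to adapt the standard OMD regret analysis to the time-varying setting, relying on the convexity of the increments $\Psi^{t+1}-\Psi^t$ as the structural property that makes a clean telescoping possible. The starting point is the first-order optimality condition for the argmin defining $w^{t+1}$. A direct gradient computation on $F^t(w):=\scal{\xi^t}{w}+\breg_{\Psi^t}(w,w^t)+\breg_{\Psi^{t+1}-\Psi^t}(w,w^1)$ shows that the two contributions of $\nabla\Psi^t(w)$ cancel, so that the variational inequality reads $\scal{\xi^t+\nabla\Psi^{t+1}(w^{t+1})-\nabla\Psi^t(w^t)-\nabla\Psi^{t+1}(w^1)+\nabla\Psi^t(w^1)}{w-w^{t+1}}\geq 0$ for all $w$ in the constraint set, which I would instantiate at $w=w^\dagger$.

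The next step is to convert the resulting inner products into Bregman divergences via the three-point identity $\scal{\nabla\Psi(a)-\nabla\Psi(b)}{c-a}=\breg_\Psi(c,b)-\breg_\Psi(c,a)-\breg_\Psi(a,b)$. The $\Psi^{t+1}$ piece is handled directly with $(a,b,c)=(w^{t+1},w^1,w^\dagger)$, whereas the $\Psi^t$ piece must first be split as $w^{t+1}-w^\dagger=(w^{t+1}-w^t)+(w^t-w^\dagger)$ so that both halves fit the identity. Regrouping and using that $\breg_\Psi(\cdot,w^1)$ is linear in its generator at a fixed second argument, namely $\breg_{\Psi^{t+1}}(w^{t+1},w^1)-\breg_{\Psi^t}(w^{t+1},w^1)=\breg_{\Psi^{t+1}-\Psi^t}(w^{t+1},w^1)$, the convexity of $\Psi^{t+1}-\Psi^t$ makes this quantity non-negative, and discarding it yields the one-step inequality
\[\scal{\xi^t}{w^{t+1}-w^\dagger}\leq \breg_{\Psi^{t+1}-\Psi^t}(w^\dagger,w^1)+\breg_{\Psi^t}(w^\dagger,w^t)-\breg_{\Psi^{t+1}}(w^\dagger,w^{t+1})-\breg_{\Psi^t}(w^{t+1},w^t)\,.\]

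To bridge back from $w^{t+1}$ to $w^t$, I would add the Fenchel--Young inequality in Bregman form $\scal{\xi^t}{w^t-w^{t+1}}\leq \breg_{\Psi^{t,\star}}(\nabla\Psi^t(w^t)-\xi^t,\nabla\Psi^t(w^t))+\breg_{\Psi^t}(w^{t+1},w^t)$, which follows by expanding both sides and using $\Psi^{t,\star}(\nabla\Psi^t(w^t))=\scal{\nabla\Psi^t(w^t)}{w^t}-\Psi^t(w^t)$, valid since $\Psi^t$ is Legendre. The $\breg_{\Psi^t}(w^{t+1},w^t)$ terms cancel. Summing over $t=1,\dots,T$ with the natural convention $\Psi^{T+1}=\Psi^T$, the $\breg_{\Psi^{t+1}-\Psi^t}(w^\dagger,w^1)$ contributions telescope to $\breg_{\Psi^T}(w^\dagger,w^1)-\breg_{\Psi^1}(w^\dagger,w^1)$, and the $\breg_{\Psi^t}(w^\dagger,w^t)-\breg_{\Psi^{t+1}}(w^\dagger,w^{t+1})$ contributions telescope to $\breg_{\Psi^1}(w^\dagger,w^1)$ after dropping the non-negative final divergence; the two $\breg_{\Psi^1}(w^\dagger,w^1)$ terms cancel and the claimed bound follows.

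The main obstacle I anticipate is arranging the three-point identities so that the only non-telescoping leftover is precisely $-\breg_{\Psi^{t+1}-\Psi^t}(w^{t+1},w^1)$: this is the unique place where the hypothesis that the increments are convex (rather than Legendre, or a non-negative scalar multiple of a single fixed regularizer) is deployed. In the DS-OMD setting of \citet{fang2020online}, the increment is a non-negative multiple of a single generator, which makes the corresponding step automatic; here the rearrangement must be carried out delicately so that every Bregman term whose sign is not a priori controlled either cancels against its Fenchel--Young counterpart or is absorbed into a telescoping sum.
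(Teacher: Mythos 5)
Your argument is correct, and it reaches the paper's bound by a genuinely different route. The paper first rewrites \eqref{eq:gsmd} as an explicit two-step scheme (an unconstrained mirror step producing $\hat{w}^{t+1}$ followed by a Bregman projection onto $\mathcal{C}$), decomposes the regret through the intermediate optimal value $\hat{q}^t=\scal{\xi^t}{\hat{w}^{t+1}}+\breg_{\Psi^t}(\hat{w}^{t+1},w^t)+\breg_{\Psi^{t+1}-\Psi^t}(\hat{w}^{t+1},w^1)$ into penalty and stability sums, controls the penalty with two laws of cosines plus the generalized Pythagorean inequality for the projection, and bounds the stability term by comparison with the \emph{unstabilized} OMD iterate $\tilde{w}^{t+1}=\argmin\{\scal{\xi^t}{\tilde w}+\breg_{\Psi^t}(\tilde w,w^t)\}$; the convexity of the increment is used there to discard $-\breg_{\Psi^{t+1}-\Psi^t}(\hat{w}^{t+1},w^1)\le 0$. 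You instead work directly with the first-order variational inequality of the constrained argmin, convert it via three-point identities, discard $-\breg_{\Psi^{t+1}-\Psi^t}(w^{t+1},w^1)\le 0$ (convexity entering at the exactly analogous spot), and recover the dual-Bregman stability term through a Fenchel--Young inequality applied at the actual iterate $w^{t+1}$ rather than at $\tilde{w}^{t+1}$; the telescoping at the end is identical. Your route is more self-contained: it needs neither the equivalence between the argmin form and the dual two-step form (the paper's Proposition in Appendix~\ref{app:gds}) nor the Pythagorean inequality. What the paper's formulation buys is reusability: the $\hat q^t$-based penalty/stability split is recycled almost verbatim in the dilated analysis of Theorem~\ref{thm:regret_dilated}, where the Fenchel conjugate of the \emph{local} regularizers is what must appear. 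One shared caveat worth noting: both your variational inequality (instantiated at $w=w^\dagger$) and the paper's Pythagorean step require $w^\dagger\in\mathcal{C}$, slightly stronger than the stated $w^\dagger\in\overline{\Omega}$, but this is harmless since the comparator in all applications is a realization plan in $\mathcal{C}$.
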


\begin{remark}
    An interesting example of the use of a regularizer with convex increments (and not only through a decreasing learning rate) is \texttt{AdaGrad} for stochastic gradient descent \citep{duchi2011adaptive}. It uses the adaptive regularization $\Psi^{t+1}=\norm{\cdot}^2_{\pa{G^t}^{1/2}}$, where $G^t$ is a positive semi-definite matrix defined with the gradients $g_k$ by either $G^t=\sum_{k=1}^t g_k^{} g_k^T$  or by the less computationally expensive $G^t=\textrm{Diag}\pa{\sum_{k=1}^t g^{}_k g_k^T}$.
\end{remark}

\paragraph{Adaptive dilatation} In the extensive-form game setting based on the dilated entropy $\Psi^{\mathrm{dil}}_{\alpha}$, this stabilization can be applied to have weights $(\alpha^t(x))_{x\in\cX,t\in[T]}$ that vary with times. The convexity assumption of $\Psi^{\mathrm{dil}}_{\alpha^{t+1}}-\Psi^{\mathrm{dil}}_{\alpha^t}$ then rewrites to having locally non-decreasing weights for each $x\in\cX$. In this particular case, the updates are obtained with the formula
\[
    \label{eq:gsmd_dilated}\mu^{t+1}=\argmin_{\mu\in\maxpi} \scal{\hell^t}{\mu^{}_{1:}}+\breg^\mathrm{dil}_{\alpha^t}(\mu,\mu^t)+\breg^\mathrm{dil}_{\alpha^{t+1}-\alpha^{t}}(\mu,\mu^1)\, .\]
\section{\texorpdfstring{\LocalOMD}{LocalOMD} algorithm}

In this section, we present and analyze the \LocalOMD algorithm.

\subsection{Algorithm}

Let us now consider the fixed sampling framework introduced in Section~\ref{sec:fixed_sampling}. Given a sequence $(\eta^t(x))_{t\in[T]}$ of locally non-increasing learning rates for each $x\in\cX$, we propose to use \OurAlgorithm, based on the updates \label{eq:update} with the adaptive weights $\alpha^t(x)=1/(\mu^s_{1:}(x)\eta^t(x))$ as explained above.

\begin{algorithm}[t] 
\caption{\OurAlgorithm}
\label{alg:omd_fixed}
\begin{algorithmic}[1]
			\STATE \textbf{Input:}\\
			~~~~ Sampling policy $\mu^s\in\maxpi$ and initial policy $\mu^1\in\maxpi$\\
                ~~~~ Bregman divergences $\breg_x$ for each information set $x\in\cX$\\
                ~~~~ $\texttt{UPDATE}\pa{t,x}$ functions that output the non-increasing (adaptive) learning rates $\eta^{t+1}(x)$ after each round $t$ for each information set $x$.

                \STATE \textbf{Output:}\\
                ~~~~ The time-averaged policy $\overline{\mu}$

                \STATE \textbf{Algorithm:}\\

                For $t=1$ to $T$\\
                ~~~~ Observes the outcome of an episode using the fixed strategy $\mu^s$\\
                ~~~~ $q_{H+1}^t\gets 0$\\\vspace{.07cm}
                ~~~~ For $h=H$ to $1$:\\\vspace{.05cm}
                ~~~~~~~~ $\eta^{t+1}(x_h^t)\gets \texttt{UPDATE}\pa{t,x_h^t}$\\
                ~~~~~~~~ $\tell_h^t\gets \left.\indic{a=a_h^t}\pa{{\ell_h^t+q_{h+1}^t}}\middle/{\mu^s(a_h^t|x_h^t)}\right.$\\\vspace{.05cm}
                ~~~~~~~~ $\mu^{t+1}(\cdot | x)\gets\argmin_{\mu\in \Delta_{\cA(x)}} h^t_x(\mu)$ and $q_h^t\gets\min_{\mu\in\Delta_{\cA(x)}}h^t_x(\mu)$\\\vspace{.02cm}
                ~~ where $h^t_x(\mu):=\scal{ \tell_h^t}{\mu}+\frac{1}{\eta^t(x_h^t)}\breg_x\pa{\mu,\mu^t(\cdot | x_h^t)}+\pa{\frac{1}{\eta^{t+1}(x_h^t)}-\frac{1}{\eta^t(x_h^t)}}\breg_x\pa{\mu,\mu^1(\cdot | x_h^t)}$\\\vspace{.1cm}
                ~~~~ For all non-visited $x\in\cX$:\\\vspace{.02cm}
                ~~~~~~~~ $\mu^{t+1}(\cdot | x)\gets \mu^{t}(\cdot | x)$
\end{algorithmic}
\end{algorithm}

\paragraph{Regularized loss} This algorithm can be interpreted as one that locally applies the updates \eqref{eq:gsmd}, but using the loss $\tell_h^t$, a regularized version of the sum of subsequent losses. Even though this algorithm results from a global minimization procedure, the regularized loss has the benefits of only using the probability $\mu^s(a|x)$ of choosing the last action $a\in\cAx$ in the important sampling, instead of the combined probability $\mu_{1:}^s(x,a)$ of the realization plan.

\subsection{Theoretical analysis}

The analysis of \LocalOMD, detailed in Appendix~\ref{app:localomd} is derived from Theorem~\ref{thm:general_entropy} that bounds the estimated regret. The results on the real regret are then obtained with Theorem~\ref{thm:estimation}. We now present two choices of regularization and their associated guarantees.

\paragraph{Optimal rates}

The following theorem uses a constant learning rate that locally depends on the $\kappa(\mu^s|x)$ quantities of Remark~\ref{rmk:kappa}, and on the  $A:=\max_{x\in\cX}\abs{\cAx}$ quantity that upper bounds the local number of available actions on the whole tree.

\begin{theorem}\label{thm:optimal_main}
    Using \LocalOMD with $\mu^1$ as the uniform policy, with the learning rates $\eta^t(x)= {\eta}/{\kappa(\mu^s | x)}$ where $\eta=\sqrt{\log(A){\kappa(\mu^s)/}{(3HT)}}$, and with $\Psi_x$ the Shannon entropy $\Psi_x(\mu)=\sum_{a\in\cAx}\mu(a)\log(\mu(a))$, the regret is bounded with a probability at least $1-\delta$ by 
    \[\regret^T_{\mathrm{min}}\leq \pa{4+2\sqrt{3}}\,H^{3/2}\sqrt{\log(A)\iota\kappa(\mu^s) T}\quad\textrm{where}\quad \iota=\log(2(\AX+1)/\delta)\,.\]
\end{theorem}

When using the balanced policy $\mu^\star$ as the sampling policy, for which $\kappa(\mu^\star)=\AX$, we obtain the rate $\tcO\pa{H^{3/2}\sqrt{\AX T}}$, near-optimal up to the $H$ dependency \citep{bai2022nearoptimal}.

\paragraph{Adaptive rates}

As \OurAlgorithm treats each information set $x\in\cX$ as a separate problem through the regularized losses $\tell_h^t$, one interesting choice is to consider the same adaptive rates that would be used for instance in the $K$-armed bandit problems. The following theorem provides an upper bound in this case.

\begin{theorem}(Informal, exact statement in Appendix~\ref{app:localomd})\\\label{thm:adaptive_main}
    For a large class of regularizers $(\Psi_x)_{x\in\cX}$ and learning rates $(\eta^t(x)_{x\in\cX,t\in[T]})$, the regret has a $\cO(\sqrt{T\log(1/\delta}))$ upper bound (hiding the game-dependent terms) with a probability at least $1-\delta$. Such learning rates include, for all $x\in\cX$ of depth $h$,
    \[\eta^t(x)=\left.\eta\middle/\sqrt{\sum_{k=1}^t \indic{x=x_h^k}}\right.\quad\textrm{or the adaptive version}\quad\eta^t(x)=\left.\eta\middle/\sqrt{\sum_{k=1}^t \indic{x=x_h^k}\pa{\tell_h^k}^2}\right.\, .\]
\end{theorem}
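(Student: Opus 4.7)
The plan is to combine Theorem~\ref{thm:general_entropy} applied locally at each information set with the concentration bound of Theorem~\ref{thm:estimation}. The key structural observation is that the dilated entropy with weights $\alpha^t(x)=1/(\mu^s_{1:}(x)\eta^t(x))$ decouples the global \eqref{eq:gsmd} step into one mirror-descent update per information set, driven by the regularized loss $\tell^t_h$. Because at each round $t$ only the information sets on the sampled trajectory are touched, the local iterates at a fixed $x$ form a subsequence indexed by a local clock $N_t(x):=\sum_{k=1}^t\indic{x=x^k_h}$, along which $\eta^t(x)$ is non-increasing, so the convex-increment hypothesis of Theorem~\ref{thm:general_entropy} holds locally.

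First I would use the standard treeplex chain rule to linearize
\[
\hat{\regret}^T_{\mathrm{min}} \;\leq\; \sum_{x\in\cX} \mu^\dagger_{1:}(x)\cdot \mathrm{LocalRegret}_x ,\qquad \mathrm{LocalRegret}_x := \sum_{t=1}^{T}\indic{x=x^t_h}\scal{\tell^t_h}{\mu^t(\cdot|x)-\mu^\dagger(\cdot|x)},
\]
so that it suffices to control each local regret against an arbitrary $\mu^\dagger(\cdot|x)\in\Delta_{\cAx}$. Applying Theorem~\ref{thm:general_entropy} locally with $\Psi^t = \Psi_x/\eta^t(x)$, $w^t=\mu^t(\cdot|x)$ and $\xi^t=\tell^t_h$ gives
\[
\mathrm{LocalRegret}_x \;\leq\; \frac{\breg_{\Psi_x}(\mu^\dagger(\cdot|x),\mu^1(\cdot|x))}{\eta^T(x)} \;+\; \sum_{k=1}^{N_T(x)}\eta^k(x)\cdot \breg_{\Psi_x^{\star}}\!\bigl(\nabla\Psi_x(\mu^k)-\eta^k(x)\tell^k_h,\nabla\Psi_x(\mu^k)\bigr),
\]
and the hypotheses on $\Psi_x$ will be chosen precisely so that the last (stability) term reduces to $C\,\eta^k(x)\sum_a\mu^k(a|x)\bigl(\tell^k_h(a)\bigr)^2$, as for Shannon or negentropy-type regularizers.

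Second I would plug in the two schedules. For $\eta^t(x)=\eta/\sqrt{N_t(x)}$ the standard inequality $\sum_{k=1}^{N}k^{-1/2}\leq 2\sqrt{N}$ yields a stability contribution $O(\eta\sqrt{N_T(x)}\cdot L_x^2)$ where $L_x^2$ bounds $\sum_a\mu^k(a|x)(\tell^k_h(a))^2$ in expectation through $1/\mu^s_{1:}(x,a)$. Summing over $x$ and using Cauchy--Schwarz together with $\sum_x N_T(x)\leq HT$ delivers the $O(\sqrt{T})$ rate hiding $H,\cX,\kappa(\mu^s)$. For the AdaGrad-style choice $\eta^t(x)=\eta/\sqrt{\sum_k\indic{x=x^k_h}(\tell^k_h)^2}$ the telescoping identity $\sum_k a_k^2/\sqrt{\sum_{j\leq k}a_j^2}\leq 2\sqrt{\sum_k a_k^2}$ collapses the stability term to $O(\eta\sqrt{\sum_k(\tell^k_h)^2})$, which concentrates around its predictable variance. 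The approximation step Theorem~\ref{thm:estimation} then converts the high-probability estimated-regret bound into one on $\regret^T_{\mathrm{min}}$, paying only the additive $4\sqrt{\iota H\kappa(\mu^s)T}$ term.

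The main obstacle will be the high-probability guarantee for the data-dependent AdaGrad-style rate: the stability sum is a self-normalized martingale process whose increments depend on the learning rate itself. I expect to lean on a Freedman/Bernstein argument, similar to the one that proves Theorem~\ref{thm:estimation}, combined with a peeling argument on $\sum_k(\tell^k_h)^2$ to strip the data-dependent normalizer. A secondary technical point is that the regularized loss $\tell^t_h$ bundles a continuation value $q^t_{h+1}$ which itself depends on later-depth updates; handling this cleanly requires an induction from the leaves (depth $H$) upward, propagating local high-probability bounds through the tree while keeping the $\log(1/\delta)$ dependence under a single union bound over $\cX$.
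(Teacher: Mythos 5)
Your overall architecture — a penalty/stability decomposition coming from the generalized stabilized OMD bound, a per-information-set treatment of the stability term driven by the regularized loss $\tell^t_h$, and a final transfer from $\hat{\regret}^T_{\mathrm{min}}$ to $\regret^T_{\mathrm{min}}$ via Theorem~\ref{thm:estimation} — matches the paper's proof (Theorem~\ref{thm:formal_adaptive}, via Theorem~\ref{thm:regret_dilated}). But you diverge at the one step that carries the real content, and your replacement is where the gap lies. You treat the stability sum for the data-dependent learning rates as a self-normalized martingale to be controlled by a Freedman-type inequality plus peeling, after bounding $\sum_a\mu^k(a|x)(\tell^k_h(a))^2$ only \emph{in expectation}. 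The paper needs none of this: Lemma~\ref{lemma:regularized_bound} shows by a leaves-to-root induction that the regularized loss is bounded \emph{deterministically}, $\tell^t_h(a^t_h)\leq(1+\lambda_1 C_\Psi)\,\kappa(\mu^s|x^t_h)$, because the importance weights in $\tell^t_h$ only involve the single-step probabilities $1/\mu^s(a|x)$, which compound along the trajectory into exactly the recursion defining $\kappa(\mu^s|x)$, plus a drift term controlled by the assumption $1/\eta^{t+1}(x)-1/\eta^t(x)\leq\lambda_1$. With that pointwise bound, both learning-rate schedules satisfy the purely deterministic condition $1/\eta^T(x)+\sum_t\eta^t(x)\indic{x=x_h^t}\leq\lambda_2\sqrt{T}$, the stability term is bounded surely by $\bra{(1+\lambda_1)C_\Psi C_\Psi^\star\kappa(\mu^s)}^2\lambda_2\abs{\cX}\sqrt{T}$, and the only concentration in the whole proof is the one already packaged in Theorem~\ref{thm:estimation}. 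Your route is not wrong in principle, but the self-normalized/peeling argument is left entirely unexecuted, it is the hardest part of your plan, and it is unnecessary for the stated $\cO(\sqrt{T\log(1/\delta)})$ bound; you have missed the simpler observation that makes the adaptive case essentially deterministic.

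A secondary issue: your opening step asserts, as a ``standard treeplex chain rule,'' the decomposition $\hat{\regret}^T_{\mathrm{min}}\leq\sum_x\mu^\dagger_{1:}(x)\cdot\mathrm{LocalRegret}_x$ with local regrets measured against $\mu^\dagger(\cdot|x)$ on the regularized losses. This is not the standard laminar (counterfactual) decomposition, because $\tell^t$ embeds the continuation values $q^t$, which are regularized \emph{minima} rather than values of $\mu^t$ or $\mu^\dagger$. The paper's Lemma~\ref{lemma:tree_reg} only yields the corresponding identity for the stability term, weighted by $\mu^t_{1:}(x)$; the comparator-dependent (penalty) part is handled globally by telescoping the dilated Bregman divergences, not locally. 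Your decomposition would need its own justification, and as stated it is the second place where the argument does not go through without additional work.
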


The adaptive learning rates mentioned for this theorem generally enjoy better performances in practice. Furthermore, they require no initial computation and are easily updated.
\section{Experiments}
We implemented \OurAlgorithm, with the parameters of Theorem~\ref{thm:optimal_main} and Theorem~\ref{thm:adaptive_main}, then tested it against the theoretically optimal \texttt{BalancedCFR} \citep{bai2022nearoptimal} and \texttt{BalancedFTRL} \citep{fiegel2023adapting}. The algorithms were compared on three standard benchmark games: Kuhn poker \citep{kuhn1950extensive}, Leduc poker \citep{southey2005bayes} and liars dice, using the OpenSpiel library \citep{openspiel}, with learning rates optimized independently for each algorithm using a grid search. The code is available at \url{https://github.com/anon0493/LocalOMD-experiments}.

The results are given with respect to the total number of episodes used for learning. This technically disadvantages the fixed sampling algorithms, as these require more than one episode at each round $t$ while still performing a single update on the policy of each player.

We observe that the two versions of \OurAlgorithm behave similarly and constantly beat \texttt{BalancedCFR}, mainly because the latter needs to update each depth with independent samples, thus needing $H$ times more episodes overall. The results of \texttt{BalancedFTRL} are more comparable, exhibiting for example better performances on liars dice but worse on Leduc poker.

\begin{figure*}[ht]
\centering
\includegraphics[width=1\textwidth]{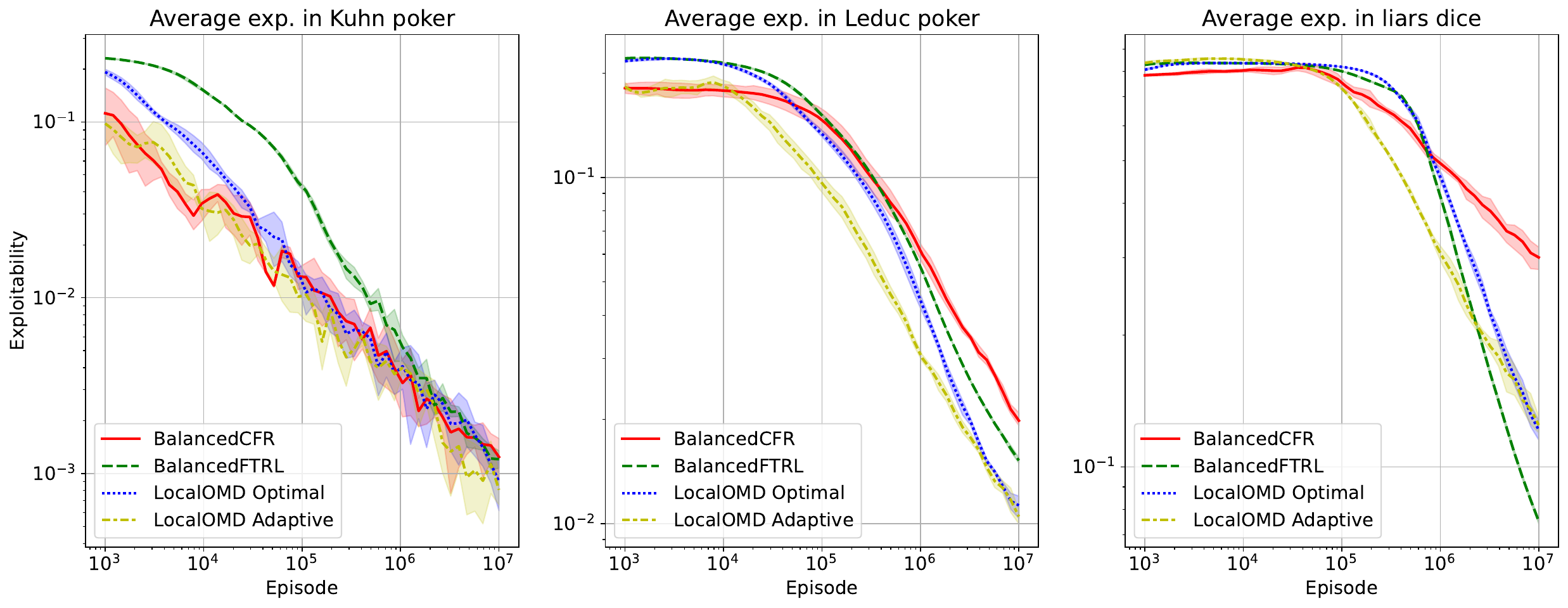}
\caption{Performances of various algorithms with respect to the total number of episodes. The vertical axis denotes the exploitability gap $\max_{(\mu,\nu)\in\maxpi\times\minpi} V^{\overline{\mu},\nu}-V^{\mu,\overline{\nu}}$, with all rewards scaled between $0$ and $1$. The total numbers of actions are $\AX=\BY=12$ for Kuhn poker, $\AX=\BY=1092$ for Leduc poker, and $\AX=\BY=24570$ for Liars dice.}
\label{sec:experiments}
\end{figure*}

\section{Conclusion}

We studied the use of a fixed sampling OMD procedure for the computation of $\epsilon$-optimal strategies. This approach relies, for each player, on an uncoupling between the observation policy and the interaction policy as described in Algorithm~\ref{alg:learning}. This uncoupling is in direct contrast with the more restrictive semi-bandit setting usually considered for self-play, where these two policies must coincide by design. Notice that this is not the standard exploration/exploitation tradeoff, as even in bandit will full monitoring, exploration is required. Seen from an optimization perspective, the two policies indeed influence two different parts of the problem: the \textit{primal iterates} (a simple representation of the interaction policies) and the \textit{dual iterates} (the estimated losses obtained through the observation policies). This distinction between the observation and the interaction was also considered in online convex optimization \citep{bach2016highly}.

While the balanced observation policy recovers the optimal rates in theory, the choice is not as straightforward for large game solving, which requires function approximation. Indeed, the size of each game sub-trees is not as relevant in this case and furthermore, the balanced policy becomes potentially expensive to compute at a given information set. A more practical choice, outside of the current framework, would be to instead use for the observations the current average policy \citep{gibson2012efficient}. This choice could still be adapted to a fixed sampling nonetheless, by restarting the algorithm after a certain number of episodes and using the computed average as the new sampling policy.

We would like to conclude by providing the following interesting research directions.
\paragraph{Problem-dependent optimality} For a given game structure and fixed sampling policy $\mu^s$, is there a policy-dependent lower bound $\cO(\sqrt{\kappa(\mu^s)T})$ on the regret? We wonder if the $\kappa(\mu^s)$ quantity of Remark~\ref{rmk:kappa} denotes some sort of complexity related to the problem.
\paragraph{On-policy algorithms} Is it also possible to remove the importance-sampling of the previous actions in the usual semi-bandit framework that observes with the current policy? The answer is not obvious since the current approach heavily relies on the fact that the sampling policy is fixed.
\paragraph{Last-iterate convergence} Current algorithms need to average the policies updated over time for their guarantees to hold. \citet{daskalakis2018training} shows that, with full information feedback, a convergence of the current policy is possible for normal-form zero-sum games. This result is later extended to extensive-form games by \citet{lee2021last}. Are these types of guarantees obtainable in a trajectory feedback setting? We especially wonder if the fixed sampling approach would help in getting such results.

\section*{Acknowledgements}
P. M\'enard acknowledges the Chaire SeqALO (ANR-20-CHIA-0020-01). Vianney Perchet acknowledges support from the French National Research Agency (ANR) under grant number ANR-19-CE23-0026 as well as from the grant “Investissements d’Avenir” (LabEx Ecodec/ANR-11-LABX-0047).
The authors would like to thank 
Gabriele Farina and Stephen McAleer for their comments and helpful discussions.

\bibliographystyle{plainnat}
\bibliography{ref.bib}

\newpage
\appendix

\section{Related works}

In this section, we review previous works on learning an $\epsilon$-optimal strategy in IIGs.

\paragraph{Full feedback} When the game is known, that is the information set structure space, transitions probability, and reward function are provided, a first line of work recasts the setting through the sequence-form representation of a game as a linear program which can be
solved efficiently \citep{Rom62,VONSTENGEL1996220,koller1996efficient}. A second line of work relies on first-order optimization methods for saddle point computation \citep{hoda2010smoothing, kroer2015faster,kroer2018solving,kroer2020faster,munos2020fast,lee2021last}.
In particular \citet{hoda2010smoothing,kroer2018solving} relies on the Nesterov smoothing technique \citet{nesterov2005smooth} whereas \citet{kroer2015faster,kroer2020faster} use the \MirrorProx algorithm \citep{nemirovski2004prox}. These methods have a rate of convergence of order $\tcO(\poly(H,\AX,\BY)/\epsilon)$.

A third approach, counterfactual regret minimization \citep{zinkevich2007regret}, leverages local regret minimization, i.e. minimizing a type of regret at each information set. Popular algorithms are based on the regret-matching algorithm \citep{hart2000simple,gordon2007no} such as \CFR algorithm \citep{zinkevich2007regret} or based on a close variant of regret-matching, e.g. \CFRp \citep{tammelin2014solving, burch2019revisiting,farina2020faster}. Note that other local regret minimizers could be used, see for example \citet{ waugh2014unified,farina2019regret}. These algorithms enjoy a guarantee of convergence of order $\tcO(\poly(H,\AX,\BY)/\epsilon^2)$. 

Nevertheless, all the methods described above need to explore \emph{the whole information set tree} (or the whole state space) in order to compute one update. The cost of one traversal is of order $\cO(X+Y)$ if the transitions and the actions of the other player are sampled; see for example the
external-sampling \MCCFR algorithm \citep{lanctot2009monte}.

\paragraph{Trajectory feedback} A way to tackle the aforementioned issues is to consider the agnostic setting where the \emph{agent
has no prior knowledge of the game and only observes trajectories of the game}. Precisely, the rewards and the transition probabilities are unknown.

\paragraph{Model-based} A first method to deal with this limited feedback is to build a \emph{model} of the game and then run any full feedback algorithm in this model. For example, \citet{zhou2020posterior} use \textit{posterior sampling} (PS, \citealp{strens2000bayesian}) to learn a model and then use the \CFR algorithm in games sampled from the posterior. They obtain a convergence rate of order $\tcO(\poly(H,S,A,B)/\epsilon^2)$ but only when the games are actually sampled according to the known prior. Instead, \citet{zhang2020finding} relies on the principle of optimism in the presence of uncertainty to incrementally build a model of the game. Then, the \CFR algorithm is fed with \emph{optimistic estimates} of the local regrets. They prove a high-probability sample complexity of order $\tcO(\poly(H,S,A,B)/\epsilon^2)$.

\paragraph{Model-free} Another line of work \citep{lanctot2009monte,johanson2012efficient,shcmid2018variance,farina2020stochastic} directly estimates the local regret via importance sampling that is then fed to the \CFR algorithm. In particular, the outcome-sampling \MCCFR \citep{lanctot2009monte, farina2020stochastic} builds an importance sampling estimate of the counterfactual regret by playing according to a well-chosen \emph{balanced policy}. Intuitively, this policy should ensure to \emph{explore all the information sets}. Note that, depending on the structure of the information set space, playing uniformly over the actions at each information set is not necessarily a good choice. Instead, \citet{farina2020stochastic} propose as a balanced policy to play action with probability proportional to the number of leaves in the sub-tree of possible next information sets. In particular, the outcome-sampling \MCCFR algorithm requires the knowledge of the information set space structure to build its balanced policy. Nonetheless, in order to obtain $\epsilon$-optimal strategies with high probability, \MCCFR needs at most
 $\tcO(H^3(\AX+\BY )/\epsilon^2)$ realizations of the game \citep{farina2020stochastic,bai2022nearoptimal}. 

Later, \citet{kozuno2021learning} proposed to combine \emph{Online Mirror Descent (\OMD)} with \emph{dilated Shannon entropy as regularizer} and importance sampling estimate of the losses of a player, see also \citet{farina2021bandit}. They prove a sample complexity, for the  proposed algorithm, \IXOMD, of order $\tcO(H^2(X\AX+Y\BY  )/\epsilon^2)$. Interestingly, they do not need to know in advance the structure of the information set space to obtain this bound. However, the sample complexity of \IXOMD does not match the lower bound for this setting which is of order $\cO((\AX+\BY)/\epsilon^2)$. Recently, \citet{bai2022nearoptimal} proposed the \BalancedOMD algorithm that enjoys also relies on \OMD but with a dilated entropy weighted by the realization plans of balanced policies as regularizers. For this algorithm, they prove a sample complexity of order $\tcO(H^3(\AX+\BY)/\epsilon^2)$.

\paragraph{Perfect information Markov game} Another line of work considers Markov game \citet{kuhn1953extensive} with \emph{perfect} information and limited feedback. However, it does not assume perfect recall. \citet{sidford2020soving,zhang2020model,daskalakis2020independent,wei2021last} consider the case where a \emph{generative model} is available whereas \citet{wei2017online,bai2020near,xie2020learning,liu2021sharp} deal with the \emph{trajectory feedback} case. Although this setting is related to ours there is no direct comparison between the two.

\section{Regret estimation} \label{app:approximation}

In this section, we aim to establish Theorem~\ref{thm:estimation} of the main paper. We start by stating a Bernstein-type inequality that we will use multiple times. It can be found e.g. in Exercise~5.15 by \citet{lattimore2020bandit_book}. We provide a short proof below as we did not find any for this precise statement.

\begin{lemma}\label{lemma_concentration}
    Let $(U^t)_{t\in [T]}$ be a sequence of random variables with respect to a filtration $\salgebra$, and $\gamma>0$ be a fixed constant such that for all $t$, $\gamma U^t\leq 1$. Then with a probability of at least $1-\delta'$:

    \[\sum_{t=1}^T \pa{U^t-\E\bra{U^t \middle| \salgebra^{t-1}}}\leq \gamma \sum_{t=1}^T \E\bra{(U^t)^2 \middle| \salgebra^{t-1}} + \frac{1}{\gamma}\log(\frac{1}{\delta'})\]
\end{lemma}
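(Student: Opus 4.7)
The plan is to construct an exponential supermartingale and then apply Markov's inequality, which is the standard route for Bernstein-type bounds under a one-sided boundedness assumption.

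First, I would record the elementary inequality $e^x \leq 1 + x + x^2$, valid for every $x \leq 1$ (direct verification: the function $x \mapsto 1 + x + x^2 - e^x$ has a unique maximum on $(-\infty,1]$ and one checks the value at $x=1$ is nonnegative; for large negative $x$ the quadratic dominates). Applied with $x = \gamma U^t$, which satisfies $x \leq 1$ by assumption, and then taking conditional expectation given $\salgebra^{t-1}$, this yields
\[
    \E\bigl[e^{\gamma U^t} \bigm| \salgebra^{t-1}\bigr] \;\leq\; 1 + \gamma \E[U^t \mid \salgebra^{t-1}] + \gamma^2 \E[(U^t)^2 \mid \salgebra^{t-1}] \;\leq\; \exp\!\Bigl(\gamma \E[U^t \mid \salgebra^{t-1}] + \gamma^2 \E[(U^t)^2 \mid \salgebra^{t-1}]\Bigr),
\]
using $1+u \leq e^u$ in the last step.

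Next I would introduce the process
\[
    M^t \;:=\; \exp\!\left(\sum_{s=1}^{t} \Bigl[\gamma U^s - \gamma \E[U^s \mid \salgebra^{s-1}] - \gamma^2 \E[(U^s)^2 \mid \salgebra^{s-1}]\Bigr]\right),
\]
with $M^0 := 1$. The inequality above, rearranged, says exactly $\E[M^t \mid \salgebra^{t-1}] \leq M^{t-1}$, so $(M^t)_{t \geq 0}$ is a nonnegative supermartingale with $\E[M^T] \leq 1$.

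Finally, I would apply Markov's inequality to $M^T$: with probability at least $1-\delta'$,
\[
    \sum_{t=1}^T \Bigl(\gamma U^t - \gamma \E[U^t \mid \salgebra^{t-1}] - \gamma^2 \E[(U^t)^2 \mid \salgebra^{t-1}]\Bigr) \;\leq\; \log(1/\delta'),
\]
and dividing by $\gamma > 0$ gives the claimed bound. The only mildly delicate step is the first one, which is why I would be explicit about the domain of validity of $e^x \leq 1 + x + x^2$; once that is in hand the supermartingale construction and the Markov step are routine.
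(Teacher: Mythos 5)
Your proposal is correct and follows essentially the same route as the paper: the inequality $e^x \leq 1+x+x^2$ for $x\leq 1$ applied to $\gamma U^t$, followed by $1+u\leq e^u$, the construction of the same exponential supermartingale, and Markov's inequality. No meaningful differences.
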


\begin{proof}
    For any $t\in[T]$, using the inequalities $\exp(x)\leq 1+x+x^2$ for all $x\leq 1$ and $1+x\leq \exp(x)$ for all $x\in\R$, we have
    \begin{align*}
        \E\bra{\exp\pa{\gamma U^t}\middle| \salgebra^{t-1}}&\leq \E\bra{1+\gamma U^t+\gamma^2 (U^t)^2\middle|\salgebra^{t-1}}\\
        &=1+\gamma \E\bra{U^t\middle| \salgebra^{t-1}}+ \gamma^2\E\bra{(U^t)^2\middle|\salgebra^{t-1}}\\
        &\leq \exp\pa{\gamma\E\bra{U^t\middle| \salgebra^{t-1}}+ \gamma^2\E\bra{(U^t)^2\middle|\salgebra^{t-1}}}\, .
    \end{align*}

    This implies that the random process $(S_t)_{t\in[T]}$ defined by
    \[S_t:=\exp\pa{\sum_{k=1}^t\gamma\pa{U^k-\E\bra{U^k\middle| \salgebra^{k-1}}}-\sum_{k=1}^t\gamma^2\E\bra{(U^k)^2\middle|\salgebra^{k-1}}}\]
    is a super-martingale, with $S_0=1$. Using the Markov inequality, we then get
    \[\P\pa{\frac{1}{\gamma}\log(S_T) > \frac{1}{\gamma}\log\pa{\frac{1}{\delta'}}}=\P\pa{S_T> \frac{1}{\delta'}}\leq \delta'\,\E(S_T)\leq \delta'\]

    which immediately yields the stated inequality with probability at least $1-\delta'$.
\end{proof}

This lemma is then used for Theorem~\ref{thm:estimation}. The filtration $(\salgebra^t)_{t\in[T]}$ will be used, such that $\salgebra^t$ is the sigma-algebra of all variables of the self-play algorithm up to the execution of episode $t+1$.

\begin{theorem}
    Assume that the estimated losses are obtained with a fixed positive sampling policy $\mu^s$ as above. Then, for any sequence $(\mu^t)_{t\in [T]}$ of $\maxpi$ and any $\delta\in(0,1)$, the following bound holds with a probability at least $1-\delta$
    \[\regret^T_\textrm{min}\leq \max\left\{\hat{\regret}^T_\textrm{min},0\right\}+ 4\sqrt{ \iota H\kappa(\mu^s) T}\]
    where
    \[\iota:=\log\pa{\frac{\AX+1}{\delta}}\quad\textrm{and}\quad\kappa(\mu^s):=\max_{\mu\in\maxpi}\sum_{x\in\cX}\sum_{a\in\cA_x}\frac{\mu_{1:}(x,a)}{\mu_{1:}^s(x,a)}\, .\]
\end{theorem}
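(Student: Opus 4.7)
}

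The plan is to decompose the true regret by inserting the estimated losses and then bound the two resulting martingale error terms via Lemma~\ref{lemma_concentration}, using a different application strategy for each. Let $\mu^\star$ be a comparator achieving the maximum in $\regret^T_{\mathrm{min}}$. Since $\nu^t$ is $\salgebra^{t-1}$-measurable, the importance-sampling estimator is unbiased, $\E\bra{\hell^t\middle|\salgebra^{t-1}}=\ell^t$. Setting $(\mathrm{A}):=\sum_{t=1}^T\langle\ell^t-\hell^t,\mu_{1:}^t\rangle$ and $(\mathrm{B}):=\sum_{t=1}^T\langle\hell^t-\ell^t,\mu_{1:}^\star\rangle$, one has
\[\regret^T_{\mathrm{min}} = \sum_{t=1}^T\langle\hell^t,\mu_{1:}^t-\mu_{1:}^\star\rangle + (\mathrm{A}) + (\mathrm{B}) \le \hat{\regret}^T_{\mathrm{min}} + (\mathrm{A}) + (\mathrm{B})\,.\]
The $\max\{\cdot,0\}$ in the statement is harmless since $\regret^T_\mathrm{min}\ge 0$. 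I aim to show $(\mathrm{A}),(\mathrm{B}) \leq 2\sqrt{\iota H\kappa(\mu^s) T}$ using a total of $\AX+1$ applications of Lemma~\ref{lemma_concentration} at confidence $\delta'=\delta/(\AX+1)$, so that a union bound yields the claimed probability $1-\delta$ and matches $\log(1/\delta')=\iota$.

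For $(\mathrm{B})$, the comparator $\mu^\star$ depends on the entire history, so I need a bound that is uniform in the comparator. I apply Lemma~\ref{lemma_concentration} coordinate-wise to $\hell^t(x,a)-\ell^t(x,a)$, with scale $\gamma_{x,a}=\mu_{1:}^s(x,a)\beta$ for some $\beta\in(0,1]$. The range condition reduces to $\beta\le 1$, and the conditional second moment is at most $p_{1:}^{\nu^t}(x)/\mu_{1:}^s(x,a)$ by a direct computation on the importance-sampling estimator. After a union bound over the $\AX$ coordinates, multiplying each per-coordinate inequality by $\mu_{1:}^\star(x,a)$ and summing, the variance piece simplifies to $\beta\sum_{t,x,a}\mu_{1:}^\star(x,a)p_{1:}^{\nu^t}(x)=\beta HT$ (since at each depth the inner sum equals $1$), while the logarithmic piece is at most $\iota\kappa(\mu^s)/\beta$ by the very definition of $\kappa$. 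Optimizing $\beta$ yields $(\mathrm{B}) \le 2\sqrt{\iota H\kappa(\mu^s) T}$.

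For $(\mathrm{A})$, the same coordinate-wise approach would introduce an unwanted factor $\sum_{x,a}1/\mu_{1:}^s(x,a)$ instead of $\kappa(\mu^s)$; this is the main subtlety of the proof. Instead, I apply Lemma~\ref{lemma_concentration} directly to the scalar process $W^t := \langle\ell^t-\hell^t,\mu_{1:}^t\rangle$. The explicit expression $\langle\hell^t,\mu_{1:}^t\rangle=\sum_{h=1}^H (\mu_{1:}^t/\mu_{1:}^s)(x_h^t,a_h^t)\,\ell_h^t$ shows, along any trajectory, that $\langle\hell^t,\mu_{1:}^t\rangle \le \sum_{x,a}\mu_{1:}^t(x,a)/\mu_{1:}^s(x,a) \le \kappa(\mu^s)$ almost surely. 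Combined with $\E\bra{\langle\hell^t,\mu_{1:}^t\rangle\middle|\salgebra^{t-1}}=\langle\ell^t,\mu_{1:}^t\rangle\le H$, this yields the variance control
\[\E\bra{(W^t)^2\middle|\salgebra^{t-1}}\le\E\bra{\langle\hell^t,\mu_{1:}^t\rangle^2\middle|\salgebra^{t-1}}\le\kappa(\mu^s)\langle\ell^t,\mu_{1:}^t\rangle\le H\kappa(\mu^s)\,,\]
bounding one factor of the square almost surely and the other by its conditional mean. Since $W^t\le H$ as well, Lemma~\ref{lemma_concentration} applies for any $\gamma\le 1/H$, and optimizing $\gamma=\sqrt{\iota/(H\kappa(\mu^s) T)}$ gives $(\mathrm{A})\le 2\sqrt{\iota H\kappa(\mu^s) T}$; the edge case $\gamma>1/H$ is absorbed by the trivial bound $HT$ since $\kappa(\mu^s)\ge\AX\ge H$. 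Adding the two bounds completes the proof.
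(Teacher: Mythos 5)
Your proof is correct and follows essentially the same route as the paper's: the same decomposition into the two martingale error terms, the same Bernstein-type lemma applied coordinate-wise with a union bound over the $\AX$ coordinates for the comparator term and as a single scalar process for the $\mu^t$ term, and the same parameter optimization yielding $2\sqrt{\iota H\kappa(\mu^s)T}$ for each piece. The only cosmetic differences are that the paper negates the scalar process to avoid any range condition on $\gamma$ (rather than using $\gamma\le 1/H$) and controls its variance via Cauchy--Schwarz over depths rather than via the almost-sure bound $\langle\hell^t,\mu_{1:}^t\rangle\le\kappa(\mu^s)$; also, the degenerate case $\beta>1$ in your term $(\mathrm{B})$ needs the same trivial-bound remark you already give for $(\mathrm{A})$.
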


\begin{proof}
We want to show that, with probability at least $1-\delta$, that
    \[\sum_{t=1}^t\scal{\ell^t-\hell^t}{\mu_{1:}^t-\mu_{1:}}\leq 4\sqrt{ \iota H\kappa(\mu^s) T}\]
holds for all $\mu\in\maxpi$. Then the property follows after re-organizing the inequality and maximizing over $\mu$. In order to do so, we divide this term into two parts:
\[\sum_{t=1}^T\scal{\ell^t-\hell^t}{\mu_{1:}^t-\mu_{1:}}=\underbrace{\sum_{t=1}^T \scal{\hell^t-\ell^t}{\mu_{1:}}}_{\textrm{EST I}}+\underbrace{\sum_{t=1}^T \scal{\ell^t-\hell^t}{\mu_{1:}^t}}_{\textrm{EST II}}\,.\]

We will furthermore assume that $HT\geq\iota\kappa(\mu^s)$, as otherwise, $4\sqrt{\iota H\kappa(\mu^s)T}\leq 4HT$ and the property immediately follows from $\regret^T_\textrm{min}\leq HT$.

\textit{Upper bound of EST I} For all $x\in\cX$ of depth $h$ and $a\in\cA(x)$, we apply Lemma~\ref{lemma_concentration} to the random process 
    \[U^t_{x,a}=\ell_h^t\indic{x=x_h^t,a=a_h^t}\]
    with $\delta'=\delta/(AX+1)$ and a fixed $\gamma_1\in (0,1]$ we will specify later. This yields, with a probability at least $1-\delta'$, that

    \begin{align*}
        \sum_{t=1}^T \pa{\ell_h^t\indic{x=x_h^t,a=a_h^t}-\E\bra{\ell_h^t\indic{x=x_h^t,a=a_h^t}\middle|\salgebra^{t-1}}}&\leq \gamma_1\sum_{t=1}^T\E\bra{\pa{\ell_h^t}^2\indic{x=x_h^t,a=a_h^t} \middle| \salgebra^{t-1}}+\frac{\iota}{\gamma_1}\\
        &\leq \gamma_1\sum_{t=1}^T\E\bra{\ell_h^t\indic{x=x_h^t,a=a_h^t} \middle| \salgebra^{t-1}}+\frac{\iota}{\gamma_1}\,.
    \end{align*}

    By definition of the estimated loss, $\ell_h^t\indic{x=x_h^t,a=a_h^t}/\mu^s_{1:}(x,a)=\hell^t(x,a)$. We thus divide by $\mu^s_{1:}(x,a)$ both sides of the inequality, and the unbiasedness of the loss estimator yields

    \begin{align*}
        \sum_{t=1}^T \bra{\hell^t(x,a)-\ell^t(x,a)}&\leq \gamma_1\sum_{t=1}^T \ell^t(x,a)+\frac{\iota}{\gamma_1\mu^s_1:(x,a)}\,.
    \end{align*}

    This inequality holds for all $(x,a)$ with a probability of at least $1-\delta\AX/(\AX+1)$. Taking the scalar product with any $\mu\in\maxpi$ then gives

    \begin{align*}
        \sum_{t=1}^T\scal{\hell^t-\ell^t}{\mu_{1:}}&\leq \gamma_1\sum_{t=1}^T \scal{\ell^t}{\mu_{1:}}+\frac{1}{\gamma_1}\sum_{x\in\cX}\sum_{a\in\cA(x)}\frac{\mu_{1:}(x,a)}{\mu_{1:}^s(x,a)}\\
        &\leq \gamma_1 HT+\frac{\iota}{\gamma_1}\kappa(\mu^s)\,.
    \end{align*}

    Using $\gamma_1=\sqrt{\iota\kappa(\mu^s)/(HT)}\leq 1$ (by assumption), finally yields

    \[\textrm{EST I}\leq 2\sqrt{\iota H\kappa(\mu^s)T}\,.\] 

\textit{Upper bound of EST II} For this upper bound, we apply Lemma~\ref{lemma_concentration} directly to the sequence $U^t=\scal{-\hell^t}{\mu_{1:}^t}$. We now choose $\gamma_2\in \R_+$ (no further assumption is needed on $\gamma_2$ as the sequence is negative) and apply the lemma to get with probability at least $1-\delta/(\AX+1)$ 

\begin{align*}
    \sum_{t=1}^T\scal{\ell^t-\hell^t}{\mu_{1:}^t}&\leq \gamma_2\sum_{t=1}^T \E\bra{\scal{\hell^t}{\mu_{1:}^t}^2\middle|\salgebra^{t-1}}+\frac{\iota}{\gamma_2}\\
    &=\gamma_2\sum_{t=1}^T\E\bra{\pa{\sum_{h=1}^H\pa{\ell_h^t}\sum_{x\in\cX}\sum_{a\in\cAx}\indic{x=x_h^t,a=a_h^t}\frac{\mu_{1:}^t(x,a)}{\mu_{1:}^s(x,a)}}^2\middle|\salgebra^{t-1}}+\frac{\iota}{\gamma_2}\\
    \textrm{(Cauchy-Schwarz)\quad}&\leq \gamma_2 H\sum_{t=1}^T\E\bra{\sum_{h=1}^H\pa{\ell_h^t}^2\sum_{x\in\cX}\sum_{a\in\cAx}\indic{x=x_h^t,a=a_h^t}\frac{\mu_{1:}^t(x,a)^2}{\mu_{1:}^s(x,a)^2}\middle|\salgebra^{t-1}}+\frac{\iota}{\gamma_2}\\
    &\leq \gamma_2 H\sum_{t=1}^T\E\bra{\sum_{h=1}^H \ell_h^t\sum_{x\in\cX}\sum_{a\in\cAx}\indic{x=x_h^t,a=a_h^t}\frac{\mu_{1:}^t(x,a)}{\mu_{1:}^s(x,a)^2}\middle|\salgebra^{t-1}}+\frac{\iota}{\gamma_2}\\
    &=\gamma_2 H\sum_{t=1}^T\E\bra{\sum_{h=1}^H \sum_{x\in\cX}\sum_{a\in\cAx}\hell^t(x,a)\frac{\mu_{1:}^t(x,a)}{\mu_{1:}^s(x,a)}\middle|\salgebra^{t-1}}+\frac{\iota}{\gamma_2}\\
    &= \gamma_2 H \sum_{t=1}^T\sum_{x\in\cX}\sum_{a\in\cAx}\ell^t(x,a)\frac{\mu_{1:}^t(x,a)}{\mu_{1:}^s(x,a)}+\frac{\iota}{\gamma_2}\\
    (\textrm{as\;} \ell^t(x,a)\leq 1)\quad &\leq \gamma_2 H \sum_{t=1}^T\sum_{x\in\cX}\sum_{a\in\cAx}\frac{\mu_{1:}^t(x,a)}{\mu_{1:}^s(x,a)}+\frac{\iota}{\gamma_2}\\
    &\leq \gamma_2 H\kappa(\mu^s)T +\frac{\iota}{\gamma_2}\,.
\end{align*}

Taking $\gamma_2=\sqrt{\frac{\iota}{H\kappa(\mu^s)T}}$ then leads to 
\[\sum_{t=1}^T\scal{\ell^t-\hell^t}{\mu_{1:}^t}\leq 2\sqrt{\iota H\kappa(\mu^s)T}\,.\]

Summing the two inequalities yields the inequality of the theorem with a probability of at least $1-\delta$.
\end{proof}

\section{Balanced policy and $\kappa$} \label{app:kappa}

This section deals with the $\kappa(\mu^s)$ and local $\kappa(\mu^s|x)$ of the main paper, and links it to the balanced policy $\mu^\star$.

\paragraph{Recursive $\kappa$ computation} Let $\mu^s$ be the positive sample policy. For any $\mu\in\maxpi$ and $x\in\cX$ of depth $h$, we define $\kappa_\mu(\mu^s|x)$ the local sum of ratios against $\mu$ in the subtree induced by $x$, i.e. 
\[\kappa_\mu(\mu^s|x):=\sum_{x'\in\cX, x\textrm{ is in the history of }x'}\;\sum_{a'\in\cA(x')}\frac{\mu_{h:}(x',a')}{\mu^s_{h:}(x',a')}\]
where, if $(x'_1,a'_1...,x'_{h'},a')$ is the history of $(x',a')$, 
\[\mu_{h:}(x',a'):=\Pi_{i=h}^{h'}\ \mu(a'_i | x'_i)\, .\]
We then formally define $\kappa(\mu^s |x)$ as $\kappa(\mu^s |x):=\max_{\mu\in\maxpi} \kappa_\mu(\mu^s |x)$. For any $\mu\in\maxpi$, the following recursive formula stands
\[\kappa_\mu(\mu^s|x)=\sum_{a\in\cA(x)}\frac{\mu(a|x)}{\mu^s(a|x)}\pa{1+\sum_{x'\in\cX, x'\textrm{ directly follows }(x,a)}\kappa_{\mu}(\mu^s|x')} \]
that follows from the definition of $\kappa_\mu(\mu^s|x)$. The same kind of recursion can then be obtained for $\kappa(\mu^s|x)$, because each appearance of $\mu$ in the previous equality can be maximized independently (depending on different information sets). This yields

\begin{align}
    \kappa(\mu^s|x)&=\max_{\mu\in\Delta_{\cA(x)}}\sum_{a\in\cA(x)}\frac{\mu(a)}{\mu^s(a|x)}\pa{1+\sum_{x'\in\cX, x'\textrm{ directly follows }(x,a)}\kappa(\mu^s|x')}\nonumber \\
    &=\max_{a\in\cA(x)}\frac{1}{\mu^s(a|x)}\pa{1+\sum_{x'\in\cX, x'\textrm{ directly follows }(x,a)}\kappa(\mu^s|x')}\, \label{rec_kappa},
\end{align}
which allows for a simple recursive computation of $\kappa(\mu^s | x)$. Finally, once the whole recursive computation is done, $\kappa(\mu^s)$ itself can be computed by, defining $\cX_1$ the information sets of depth $1$,
\[\kappa(\mu^s)=\sum_{x_1\in\cX_1}\kappa(\mu^s|x_1)\, .\]

\paragraph{Balanced policy} $\kappa(\mu^s|x)$ can also be minimized over $\mu^s\in\maxpi$ recursively from the leaves using the tree structure. Indeed, for each $x\in\cX$, assuming that the minimizers of $\kappa(\mu^s | x')$ are already known for subsequent $x'$, the policy $\mu^s\in\Delta_{\cA(x)}$ that minimizes the maximum along the actions $a\in\cA(x)$ can be computed from \eqref{rec_kappa}. Furthermore, if we define $A^\tau(x,a)$ and $A^\tau(x)$ the total number of actions in the subtrees respectively induced by $(x,a)$ and $x$, i.e. 
\[A^\tau(x,a):=1+\sum_{x'\in\cX, (x,a)\textrm{ is in the history of }x'}\abs{\cA(x')}\quad\textrm{and}\quad A^\tau(x):=\sum_{a\in\cA(x)}A^\tau(x,a)\, ,\]
we can show that $\min_{\mu^s\in\maxpi} \kappa(\mu^s | x)=A^\tau(x)$, and that the minimum is attained by the balanced policy $\mu^\star$ defined by
\[\mu^\star(a|x):=\frac{A^\tau(x,a)}{A^\tau(x)}\, .\]
Indeed, if we assume in \eqref{rec_kappa} that the previous property holds for the $\kappa(\mu^s|x')$, then

\[\kappa(\mu^s|x)=\max_{a\in\cA(x)}\frac{1}{\mu^s(a|x)}\pa{1+\sum_{x'\in\cX, x'\textrm{ directly follows }(x,a)}A^\tau(x')}\\
    =\max_{a\in\cA(x)}\frac{A^\tau(x,a)}{\mu^s(a|x)}\]

and the previous equality is minimized when the $\mu^s(a|x)$ are proportional to the $A^\tau(x,a)$, achieved by the balanced policy $\mu^\star$. With this policy, the same equality gives $\kappa(\mu^\star|x)=A^\tau(x)$, which concludes the induction. 

Finally, computing $\kappa(\mu^\star)$ yields
\[\kappa(\mu^\star)=\sum_{x_1\in\cX_1}\kappa(\mu^\star | x_1)=\sum_{x_1\in\cX_1}A^\tau(x_1)=\AX\,.\]

\section{Generalized dual stabilized online mirror descent} \label{app:gds}

This section will establish the bound related to the updates \eqref{eq:gsmd} obtained with any Legendre function.

\subsection{General Bregman divergence properties}\label{app:breg_properties}

We start this section by stating multiple properties of the Bregman divergence $\breg_{\Psi}$ for $\Psi$ a convex function, continuously differentiable on an open $\Omega$ and defined on $\overline{\Omega}$, that can be found in \citep{cesa-bianchi2006prediction}.

\textit{Law of cosines :} For any $x\in\overline{\Omega}$ and $w,z\in\Omega$, the following equality holds
\[\breg_\Psi(x,w)=\breg_\Psi(x,z)+\breg_\Psi(z,w)-\scal{\nabla\Psi(w)-\nabla\Psi(z)}{x-z}\, .\]

\textit{Bregman projection :} For $\mathcal{C}$ a closed convex of $\overline{\Omega}$, and $\Psi$ strictly convex, we can define the Bregman projection $\Pi_\mathcal{C}^\Psi$ over $\overline{\Omega}$ by
\[\Pi_\mathcal{C}^\Psi(w)=\argmin_{z\in\mathcal{C}}\breg_{\Psi}(z,w)\, .\]
This Bregman projection satisfies a generalized Pythagorean inequality, for $w\in\Omega$ and $z\in\mathcal{C}$
\[\breg_\Psi(z,w)\geq \breg_\Psi(z,\Pi_\mathcal{C}^\Psi(w))+\breg_\Psi(\Pi_\mathcal{C}^\Psi(w),w)\]

\textit{Fenchel dual :} We defined the Fenchel dual $\Psi^\star$ of a Legendre function $\Psi$ for any $\xi\in\R^n$ by
\[\Psi^\star(\xi)=\sup_{w\in \overline{\Omega}}\scal{\xi}{w}-\Psi(w)\, .\]
If we consider $\Omega^\star:=\nabla\Psi(\Omega)$, it can be shown that $\nabla\Psi^\star$ is the inverse function of $\nabla\Psi$ over $\Omega^\star$, i.e. for any $w\in\Omega$, $\nabla\Psi^\star(\nabla\Psi(w))=w$. Furthermore, for $w,z\in\Omega$,
\[\breg_{\Psi}(w,z)=\breg_{\Psi^\star}(\nabla\Psi^\star(z),\nabla\Psi^\star(y))\, .\]

\textit{Strong convexity:} $\Psi$ is said to be $1$-strongly convex with respect to a norm $\norm{\cdot}$ if for all $w,z\in\Omega$
\[\Psi(z)\geq \Psi(w)+\scal{\nabla\Psi(w)}{z-w}+\frac{1}{2}\norm{w-z}^2\, .\]
In this case, the Bregman divergence of the Fenchel dual $\Psi^\star$ satisfies for any $\xi_1,\xi_2\in\Omega^\star$
\[\breg_{\Psi^\star}(\xi_1,\xi_2)\leq \norm{\xi_1-\xi_2}^{2}_\star\]
where $\norm{\cdot}_\star$ is the dual norm of $\norm{\cdot}$.

\subsection{GDS-OMD Analysis}

We will assume in the following parts that the updates of the following algorithm are properly defined, which happens when all vectors $y^{t+1}$ belong to the Fenchel dual space $\Omega^{t+1,\star}:=\nabla\Psi^{t+1}(\Omega)$. We make the same assumption on the regular OMD iterates $z^t-\xi^t$.

\begin{algorithm}[t] 
\caption{Generalized dual-stabilized online mirror descent}
\label{alg_general_entropy}
\begin{algorithmic}[1]
			\STATE \textbf{Input:}\\
			~~~~ A sequence of dual iterates $\xi^t$\\
                ~~~~ An open subset $\Omega\in\R^n$ and a closed convex $\mathcal{C}$ of $\overline{\Omega}$\\
                ~~~~ A sequence of Legendre regularizers $(\Psi^t)_{t\in [T]}$ on $\overline{\Omega}$ such that for all $t\in[T]$, $\Psi^{t+1}-\Psi^t$ is convex\\
                ~~~~ An initial primal iterate $w^1\in \mathcal{C}$\\

                \STATE \textbf{Output:}\\
                ~~~~ A sequence $(w^t)_{t\in[T]}$ of primal iterates
                \STATE \textbf{Algorithm:}\\

                For $t=1$ to $T$\\
                ~~~~ $z^t=\nabla\Psi^t(w^t)$\\
                ~~~~ $y^{t+1}=z^t-\xi^t+\nabla\Psi^{t+1}(w_1)-\nabla\Psi^t(w^1)$\\
                ~~~~ $\hat{w}^{t+1}=\nabla\Psi^{t+1,\star}(y^{t+1})$\\
                ~~~~ $w^{t+1}=\Pi_\mathcal{C}^{\Psi^{t+1}}(\hat{w}^{t+1})$
\end{algorithmic}
\end{algorithm}

We start by giving an equivalent formulation of the updates \eqref{eq:gsmd} through Algorithm~\ref{alg_general_entropy}.
\begin{proposition}
    Algorithm~\ref{alg_general_entropy} computes the updates \eqref{eq:gsmd} if they are properly defined, i.e. computes the sequence of primal iterates defined by
    \[w^{t+1}=\argmin_{w\in\mathcal{C}} \scal{\xi^t}{w}+\breg_{\Psi^t}\pa{w,w^t}+\breg_{\Psi^{t+1}-\Psi^t}\pa{w,w^1}\, .\]
\end{proposition}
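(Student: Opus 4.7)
The plan is to unfold the Bregman divergences in the objective, collect the linear terms, and recognize the resulting problem as a two-step mirror descent: an unconstrained step computing $\hat{w}^{t+1}$ via the dual, followed by a Bregman projection onto $\mathcal{C}$.

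First I would expand, using $\breg_\Psi(w,w')=\Psi(w)-\Psi(w')-\scal{\nabla\Psi(w')}{w-w'}$,
\begin{align*}
F(w) &:= \scal{\xi^t}{w}+\breg_{\Psi^t}(w,w^t)+\breg_{\Psi^{t+1}-\Psi^t}(w,w^1)\\
&= \Psi^{t+1}(w)-\scal{\,\nabla\Psi^t(w^t)-\xi^t+\nabla\Psi^{t+1}(w^1)-\nabla\Psi^t(w^1)\,}{w} + C_t,
\end{align*}
where $C_t$ is a quantity independent of $w$. The key observation is that the coefficient of $-w$ inside the inner product is precisely the $y^{t+1}$ defined in Algorithm~\ref{alg_general_entropy}, so $F(w)=\Psi^{t+1}(w)-\scal{y^{t+1}}{w}+C_t$.

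Next I would connect this to a Bregman divergence. Since $\Psi^{t+1}$ is Legendre and $y^{t+1}\in\Omega^{t+1,\star}$ by the standing assumption that the update is well defined, the point $\hat{w}^{t+1}=\nabla\Psi^{t+1,\star}(y^{t+1})\in\Omega$ satisfies $\nabla\Psi^{t+1}(\hat{w}^{t+1})=y^{t+1}$ by the inverse-map property of the Fenchel conjugate recalled in Appendix~\ref{app:breg_properties}. Therefore
\[
\breg_{\Psi^{t+1}}(w,\hat{w}^{t+1})=\Psi^{t+1}(w)-\scal{y^{t+1}}{w} + C'_t,
\]
for some constant $C'_t$ independent of $w$. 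Combining with the previous identity, $F(w)$ and $\breg_{\Psi^{t+1}}(w,\hat{w}^{t+1})$ differ only by a $w$-independent constant on $\overline{\Omega}$, so they admit the same minimizer over $\mathcal{C}$.

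Finally, the definition of Bregman projection gives
\[
\argmin_{w\in\mathcal{C}} F(w)=\argmin_{w\in\mathcal{C}} \breg_{\Psi^{t+1}}(w,\hat{w}^{t+1})=\Pi_\mathcal{C}^{\Psi^{t+1}}(\hat{w}^{t+1})=w^{t+1},
\]
which is exactly the update \eqref{eq:gsmd}. There is essentially no obstacle; the only point to handle with care is the well-definedness hypothesis, which is used twice: to guarantee $\hat{w}^{t+1}\in\Omega$ so that the Bregman projection is meaningful, and to ensure the argmin defining $w^{t+1}$ in \eqref{eq:gsmd} is attained (the convexity of $\Psi^{t+1}-\Psi^t$ plus strict convexity of $\Psi^t$ makes $F$ strictly convex, so uniqueness is automatic once attainment holds).
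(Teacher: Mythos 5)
Your proof is correct and takes essentially the same route as the paper: both arguments show that the objective in \eqref{eq:gsmd} and $\breg_{\Psi^{t+1}}(\cdot,\hat{w}^{t+1})$ agree up to a $w$-independent constant, using $\nabla\Psi^{t+1}(\hat{w}^{t+1})=y^{t+1}$, and then conclude via the Bregman projection onto $\mathcal{C}$. The only (cosmetic) difference is the direction of the computation --- you expand the \eqref{eq:gsmd} objective and recognize the divergence, whereas the paper expands $\breg_{\Psi^{t+1}}(w,\hat{w}^{t+1})$ and recovers the objective --- and your closing remarks on well-definedness and strict convexity are a welcome, if not strictly necessary, addition.
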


\begin{proof}
    By definition of $\hat{w}^{t+1}$ in Algorithm~\ref{alg_general_entropy}, we have for all iterations $t\in[T]$ and $w\in\mathcal{C}$
    \begin{align*}
        \breg_{\Psi^{t+1}}(w,\hat{w}^{t+1})&=\Psi^{t+1}(w)-\scal{\nabla\Psi^{t+1}(\hat{w}^{t+1})}{w}+C_1\\
        &=\Psi^t(w)+\pa{\Psi^{t+1}(w)-\Psi^t(w)}-\scal{y^{t+1}}{w}+C_1\\
        &=\scal{\xi^t}{w}+\pa{\Psi^t(w)-\scal{\nabla\Psi^{t}(w^t)}{w}}+\\
        &\qquad\qquad\pa{\Psi^{t+1}(w)-\Psi^t(w)-\scal{\nabla\Psi^{t+1}(w^1)-\nabla\Psi^t(w^1)}{w}}+C_1\\
        &=\scal{\xi^t}{w}+\breg_{\Psi^t}\pa{w,w^t}+\breg_{\Psi^{t+1}-\Psi^t}\pa{w,w^1}+C_2
    \end{align*}
    where $C_1$ and $C_2$ are constants independent of the choice of $w$ (but not independent of the other variables). As $w^{t+1}=\argmin_{w\in\mathcal{C}} \breg_{\Psi^{t+1}}(w,\hat{w}^{t+1})$, the updates of Algorithm~\ref{alg_general_entropy} coincide with the updates \eqref{eq:gsmd}, as both minimize the same function at each iteration up to an additive constant.
\end{proof}

The updates of Algorithm~\ref{alg_general_entropy} are then used to show Theorem~\ref{thm:general_entropy} below. Compared to the ones of \citep{McMahansurvey} that also allow adaptive regularization, these updates do not suffer from the potential linear rates observed in \citep{orabona2018scale}.

\begin{theorem}\label{thm:general_entropy}
Let $(w^t)_{t\in[T]}$ be a sequence of primal iterates generated by the updates \eqref{eq:gsmd}, with convex incremental functions. Then for any $w^\dagger\in \overline{\Omega}$,
\[\sum_{t=1}^T\scal{\xi^t}{w^t-w^\dagger}\leq \vphantom{\sum_{t=1}^T}\breg_{\Psi^{T}}(w^\dagger,w^1)+\sum_{t=1}^T\breg_{\Psi^{t,\star}}\pa{\nabla\Psi^t(w^t)-\xi^t,\nabla\Psi^t(w^t)}\]
\end{theorem}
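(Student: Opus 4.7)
The plan is to adapt the classical OMD / DS-OMD argument by first rewriting \eqref{eq:gsmd} in its Fenchel-dual form (as in Algorithm~\ref{alg_general_entropy}), so that $\nabla\Psi^{t+1}(\hat{w}^{t+1}) = \nabla\Psi^t(w^t) - \xi^t + \nabla\Psi^{t+1}(w^1) - \nabla\Psi^t(w^1)$ followed by the Bregman projection $w^{t+1} = \Pi_{\mathcal{C}}^{\Psi^{t+1}}(\hat{w}^{t+1})$. I would then split the per-round regret as $\scal{\xi^t}{w^t-w^\dagger} = \scal{\xi^t}{w^t-\hat{w}^{t+1}} + \scal{\xi^t}{\hat{w}^{t+1}-w^\dagger}$ and bound each piece separately.

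For the \textbf{stability piece} $\scal{\xi^t}{w^t-\hat{w}^{t+1}}$, I would invoke the Fenchel--Young-type identity $\sup_{v\in\overline\Omega}\bigl[\scal{\xi}{w-v}-\breg_\Psi(v,w)\bigr] = \breg_{\Psi^\star}(\nabla\Psi(w)-\xi,\nabla\Psi(w))$ (a direct consequence of $\breg_\Psi(v,w)=\breg_{\Psi^\star}(\nabla\Psi(w),\nabla\Psi(v))$ recalled in Appendix~\ref{app:breg_properties}), applied with $\Psi=\Psi^t$, $w=w^t$, $v=\hat{w}^{t+1}$, to obtain
\[\scal{\xi^t}{w^t-\hat{w}^{t+1}} - \breg_{\Psi^t}(\hat{w}^{t+1},w^t) \le \breg_{\Psi^{t,\star}}\!\pa{\nabla\Psi^t(w^t)-\xi^t,\,\nabla\Psi^t(w^t)}.\]
This produces the Fenchel-conjugate Bregman term of the theorem, at the cost of a $+\breg_{\Psi^t}(\hat{w}^{t+1},w^t)$ remainder that will be cancelled by the analysis piece.

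For the \textbf{analysis piece} $\scal{\xi^t}{\hat{w}^{t+1}-w^\dagger}$, I would substitute the dual-update expression for $\xi^t$ and expand each of the three resulting inner products via the three-point identity $\scal{\nabla\Psi(y)}{x-u} = \Psi(x)-\Psi(u)+\breg_\Psi(u,y)-\breg_\Psi(x,y)$, separately for $\Psi^t$, $\Psi^{t+1}$, and their difference $\Psi^{t+1}-\Psi^t$. After the $\Psi^t$ and $\Psi^{t+1}$ function values cancel identically, this yields
\[\scal{\xi^t}{\hat{w}^{t+1}-w^\dagger} = \breg_{\Psi^t}(w^\dagger,w^t) - \breg_{\Psi^{t+1}}(w^\dagger,\hat{w}^{t+1}) + \breg_{\Psi^{t+1}-\Psi^t}(w^\dagger,w^1) - \breg_{\Psi^t}(\hat{w}^{t+1},w^t) - \breg_{\Psi^{t+1}-\Psi^t}(\hat{w}^{t+1},w^1).\]
The generalized Pythagorean inequality replaces $\breg_{\Psi^{t+1}}(w^\dagger,\hat{w}^{t+1})$ by the larger $\breg_{\Psi^{t+1}}(w^\dagger,w^{t+1})$, and the convex-increments hypothesis makes $\breg_{\Psi^{t+1}-\Psi^t}(\hat{w}^{t+1},w^1)\ge 0$, so that $-\breg_{\Psi^{t+1}-\Psi^t}(\hat{w}^{t+1},w^1)$ can be dropped. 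This is the \emph{only} place where convexity of the increments is used, and it is exactly what lets the argument leave the setting of a decreasing scalar learning rate.

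Summing the two pieces, the $\breg_{\Psi^t}(\hat{w}^{t+1},w^t)$ terms cancel, and summation over $t\in[T]$ triggers two telescopings: $\sum_t\bigl[\breg_{\Psi^t}(w^\dagger,w^t)-\breg_{\Psi^{t+1}}(w^\dagger,w^{t+1})\bigr]$ collapses to $\breg_{\Psi^1}(w^\dagger,w^1)-\breg_{\Psi^{T+1}}(w^\dagger,w^{T+1})$, while $\sum_t \breg_{\Psi^{t+1}-\Psi^t}(w^\dagger,w^1)$ collapses to $\breg_{\Psi^{T+1}}(w^\dagger,w^1)-\breg_{\Psi^1}(w^\dagger,w^1)$ thanks to additivity of $\Psi$, $\nabla\Psi$ and the Bregman defining expression in $\Psi$. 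The two $\breg_{\Psi^1}(w^\dagger,w^1)$ contributions cancel, and discarding the nonpositive $-\breg_{\Psi^{T+1}}(w^\dagger,w^{T+1})$ leaves the advertised bound (with the indexing convention $\Psi^{T+1}:=\Psi^T$ used implicitly in the theorem statement). The main subtlety I expect is aligning these two telescopings so they combine cleanly; this is exactly where the convex-increments hypothesis becomes essential, both to make $\breg_{\Psi^{t+1}-\Psi^t}$ a legitimate Bregman divergence and to let the second telescoping absorb the extra regularization mass as $t$ grows.
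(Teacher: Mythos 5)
Your proof is correct and takes essentially the same route as the paper's: the same dual-form rewriting with a Bregman projection step, the same three-point/law-of-cosines expansions, the Pythagorean inequality for the projection, convexity of the increments used exactly once to discard $\breg_{\Psi^{t+1}-\Psi^t}(\hat{w}^{t+1},w^1)\ge 0$, the same telescoping via linearity of $\breg_{\Psi}$ in $\Psi$ with the convention $\Psi^{T+1}=\Psi^T$, and the same Fenchel-conjugate identity for the stability term. The only difference is bookkeeping: the paper groups the remainder Bregman terms into an intermediate quantity $\hat{q}^t$ before splitting into penalty and stability, whereas you split at $\hat{w}^{t+1}$ directly and let the $\breg_{\Psi^t}(\hat{w}^{t+1},w^t)$ terms cancel across the two pieces.
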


\begin{proof}
    We can assume, without any incidence on the $(w^t)_{t\in[T]}$ sequence, that $\Psi^{T+1}=\Psi^T$. We also define for all $t\in[T]$ the notations $\varphi^t=\Psi^{t+1}-\Psi^t$ and
    \[\hat{q}^t=\scal{\xi^t}{\hat{w}^{t+1}}+\breg_{\Psi^t}(\hat{w}^{t+1},w^t)+\breg_{\varphi^t}(\hat{w}^{t+1},w^1)\, .\] 
    We then divide the sum into a stability and a penalty terms:
    \[\sum_{t=1}^T\scal{\xi^t}{w^t-w^\dagger}= \underbrace{\sum_{t=1}^T\pa{\hat{q}^t-\scal{\xi^t}{w^\dagger}}}_{\mathrm{penalty}}+\underbrace{\sum_{t=1}^T\pa{\scal{\xi^t}{w^t}-\hat{q}^t}}_{\mathrm{stability}}\]
    and we look at upper-bounding these two terms.
    
    \textit{Penalty term}: For all $t\in[T]$, using the law of cosines on the Bregman divergences of $\Psi^t$ and $\varphi^t$, we have the two equalities:
    \[\breg_{\Psi^t}(w^\dagger,w^t)=\breg_{\Psi^t}(w^\dagger,\hat{w}^{t+1})+\breg_{\Psi^t}(\hat{w}^{t+1},w^t)-\scal{\nabla\Psi^t(w^t)-\nabla\Psi^t(\hat{w}^{t+1})}{w^\dagger-\hat{w}^{t+1}}\]
    and
    \[\breg_{\varphi^t}(w^\dagger,w^1)=\breg_{\varphi^t}(w^\dagger,\hat{w}^{t+1})+\breg_{\varphi^t}(\hat{w}^{t+1},w^1)-\scal{\nabla\varphi^t(w^1)-\nabla\varphi^t(\hat{w}^{t+1})}{w^\dagger-\hat{w}^{t+1}}\,.\] 
    Summing these two equalities, we get
    \begin{align*}
        \breg_{\Psi^t}&(w^\dagger,w^t)+\breg_{\varphi^t}(w^\dagger,w^1)\\
        &=\breg_{\Psi^{t+1}}(w^\dagger,\hat{w}^{t+1})+\breg_{\Psi^t}(\hat{w}^{t+1},w^t)+\breg_{\varphi^t}(\hat{w}^{t+1},w^1)-\scal{\xi^t}{w^\dagger-\hat{w}^{t+1}}\\
        &=\breg_{\Psi^{t+1}}(w^\dagger,\hat{w}^{t+1})+\hat{q}^t-\scal{\xi^t}{w^\dagger}
    \end{align*}
    as by definition of $\hat{w}^{t+1}$ and $y^{t+1}$, 
    \[\nabla\Psi^{t+1}(\hat{w}^{t+1})=y^{t+1}=-\xi_t+\nabla\Psi^t(w^t)+\nabla\varphi^t(w^1)\,.\]
    Furthermore, as $w^{t+1}=\Pi_C^{t+1}(\hat{w}^{t+1})$, the Pythagorean inequality for the Bregman divergence yields that 
    \[\breg_{\Psi^{t+1}}(w^\dagger,\hat{w}^{t+1})\geq\breg_{\Psi^{t+1}}(w^\dagger,w^{t+1})+\breg_{\Psi^{t+1}}(w^{t+1},\hat{w}^{t+1})\geq \breg_{\Psi^{t+1}}(w^\dagger,w^{t+1})\, .\]
    Injecting this in the previous equality and telescoping leads to
    \begin{align*}
        \sum_{t=1}^T\pa{\hat{q}^t-\scal{\xi^t}{w^\dagger}}&=\sum_{t=1}^T\pa{\breg_{\Psi^t}(w^\dagger,w^t)+\breg_{\varphi^t}(w^\dagger,w^1)-\breg_{\Psi^{t+1}}(w^\dagger,\hat{w}^{t+1})}\\
        &\leq \sum_{t=1}^T\pa{\breg_{\Psi^t}(w^\dagger,w^t)+\breg_{\varphi^t}(w^\dagger,w^1)-\breg_{\Psi^{t+1}}(w^\dagger,w^{t+1})}\\
        &= \breg_{\Psi^{T+1}}(w^\dagger,w^1)-\breg_{\Psi^{T+1}}(w^\dagger,w^{t+1})\\
        &\leq \breg_{\Psi^T}(w^\dagger,w^1)
    \end{align*}
    as $\Psi^T=\Psi^{T+1}$ by definition.

    \textit{Stability term}: We first notice, for all $t\in[T]$, that
    \begin{align*}
        \scal{\xi^t}{w^t}-\hat{q}^t &=\scal{\xi^t}{w^t-\hat{w}^{t+1}}-\breg_{\Psi^t}(\hat{w}^{t+1},w^t)-\breg_{\varphi^t}(\hat{w}^{t+1},w^1)\\
        &\leq \scal{\xi^t}{w^t-\hat{w}^{t+1}}-\breg_{\Psi^t}(\hat{w}^{t+1},w^t)\\
        &\leq\scal{\xi^t}{w^t-\tilde{w}^{t+1}}-\breg_{\Psi^t}(\tilde{w}^{t+1},w^t)
    \end{align*}
    where
    \[\tilde{w}^{t+1}:=\argmin_{\tilde{w}\in \Omega}\bra{\scal{\xi^t}{\tilde{w}}+\breg_{\Psi^t}(\tilde{w},w^t)}\]
    is the $\tilde{w}^{t+1}$ iterate that would be obtained using a classical OMD step with $\Psi^t$, without the stabilization. By optimality, it verifies
    \[\nabla\Psi^t(\tilde{w}^{t+1})=\nabla\Psi^t(w^t)-\xi^t\]
    and the law of cosines then yields
    \begin{align*}
        \breg_{\Psi^t}(w^t,w^t)&=\breg_{\Psi^t}(w^t,\tilde{w}^{t+1})+\breg_{\Psi^t}(\tilde{w}^{t+1},w^t)-\scal{\nabla\Psi^t(w^t)-\nabla\Psi^t(\tilde{w}^{t+1})}{w^t-\tilde{w}^{t+1}}\\
        (0)&= \breg_{\Psi^t}(w^t,\tilde{w}^{t+1})+\breg_{\Psi^t}(\tilde{w}^{t+1},w^t)-\scal{\xi^t}{w^t-\tilde{w}^{t+1}}\, .
    \end{align*}
    Plugging this in the first inequality, we directly get
    \[\scal{\xi^t}{w^t}-\hat{q}^t\leq \breg_{\Psi^t}(w^t,\tilde{w}^{t+1})\]
    and we conclude using 
    \begin{align*}
        \breg_{\Psi^t}(w^t,\tilde{w}^{t+1})&=\breg_{\Psi^{t,\star}}(\nabla\Psi^t(\tilde{w}^{t+1}),\nabla\Psi^t(w^t))\\
        &=\breg_{\Psi^{t,\star}}(\nabla\Psi^t(w^t)-\xi^t,\nabla\Psi^t(w^t))\,.
    \end{align*}
\end{proof}

\section{LocalOMD analysis} \label{app:localomd}

This section will focus on the dilated entropy approach to extensive-form games, and especially on the updates

\[\label{eq:gds-omd dilated}\tag{GDS-OMD dilated}\mu^{t+1}=\argmin_{\mu\in\maxpi} \scal{\hell^t}{\mu^{}_{1:}}+\breg^\mathrm{dil}_{\alpha^t}(\mu,\mu^t)+\breg^\mathrm{dil}_{\alpha^{t+1}-\alpha^{t}}(\mu,\mu^1)\,\]
that are used by \LocalOMD.

\subsection{General analysis}

The following proposition shows that each update of this form can be computed recursively starting from the leaves of the tree. It requires for any $t\in [T]$ the vector $q^t$ that satisfies for any $x\in\cX$ of depth $h$
    \[q^t(x)=\min_{\mu\in\maxpi} \scal{\hell^{t,\arrowx}}{\mu^{\arrowx}_{h:}}+\breg^\mathrm{dil,\arrowx}_{\alpha^t}(\mu,\mu^t)+\breg^\mathrm{dil,\arrowx}_{\alpha^{t+1}-\alpha^{t}}(\mu,\mu^1)\]
    where $\arrowx$ means that the quantity is considered on the sub-tree induced by $x$ rather than the full information set tree, and $\mu_{h:}$ is defined in Appendix~\ref{app:kappa}.

\begin{proposition}
    Consider the previous updates \eqref{eq:gds-omd dilated} and the vectors $(q^t)_{t\in[T]}$ above. Both $\mu^{t+1}$ and $q^t$ can be computed recursively starting from the leaves of the tree through
    \[\mu^{t+1}=\argmin_{\mu\in\Delta_{\cAx}}h^t_x(\mu)\quad\textrm{and}\quad q^t(x)=\min_{\mu\in\Delta_{\cAx}}h^t_x(\mu)\]
    where 
        \[h^t_x(\mu)=\scal{\tell^t(x,\cdot)}{\mu}+(1/\alpha^t(x))\,\breg_x(\mu,\mu^t(\cdot|x))+\pa{{1}/{\alpha^{t+1}(x)}-1/\alpha^t(x)}\,\breg_x(\mu,\mu^1(\cdot|x))\]
    and the regularized loss $\tell^t(x,a)$ is defined by
    \[\tell^t(x,a):=\hell^t(x,a)+\sum_{x'\in\cX |x' \textrm{ directly follows } (x,a)}q^t(x')\, .\]
\end{proposition}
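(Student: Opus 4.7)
The plan is a backward induction on the depth of information sets, exploiting the recursive factorisation of both the linear loss and the dilated Bregman divergence. The first step is to rewrite the global objective
\[F^t(\mu) := \scal{\hell^t}{\mu_{1:}} + \breg^\mathrm{dil}_{\alpha^t}(\mu,\mu^t) + \breg^\mathrm{dil}_{\alpha^{t+1}-\alpha^t}(\mu,\mu^1)\]
in the form $F^t(\mu) = \sum_{x \in \cX} \mu_{1:}(x)\, g^t_x(\mu(\cdot|x))$, where $g^t_x$ has the same expression as $h^t_x$ but with the raw $\hell^t(x,\cdot)$ instead of the regularized $\tell^t(x,\cdot)$. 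This identity follows directly from $\mu_{1:}(x,a) = \mu_{1:}(x)\mu(a|x)$ and from the definition of $\breg^\mathrm{dil}_\alpha$ as a weighted sum of local Bregman divergences. The key structural observation is that the conditionals $(\mu(\cdot|x))_{x\in\cX}$ are independent optimisation variables, and each weight $\mu_{1:}(x)$ depends only on the conditionals at proper ancestors of $x$.

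For each $x \in \cX$ at depth $h$, I would introduce the partial objective $S^t_x(\mu) := \sum_{x'\text{ in subtree of }x} \mu_{1:}(x')\, g^t_{x'}(\mu(\cdot|x'))$. The induction claim is that, after minimising over all conditionals $\mu(\cdot|x')$ with $x'$ in the subtree rooted at $x$, its minimum equals $\mu_{1:}(x)\, q^t(x)$ and the argmin at $x$ coincides with the $\mu^{t+1}(\cdot|x)$ prescribed in the proposition. The base case at deepest information sets is immediate because $\tell^t$ reduces to $\hell^t$ there (the sum over children is empty), so $g^t_x = h^t_x$ and $q^t(x) = \min_{\mu\in\Delta_{\cAx}} h^t_x(\mu)$ by definition.

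For the inductive step at $x$, I would split $S^t_x = \mu_{1:}(x)\, g^t_x(\mu(\cdot|x)) + \sum_{a\in\cAx}\sum_{x'\text{ child of }(x,a)} S^t_{x'}$. Applying the inductive hypothesis to each $S^t_{x'}$ and using $\mu_{1:}(x') = \mu_{1:}(x)\mu(a|x)$ for every such $x'$ rewrites the double sum as $\mu_{1:}(x)\sum_a \mu(a|x)\sum_{x'\text{ child of }(x,a)} q^t(x')$. Factoring out $\mu_{1:}(x)$ and merging these linear-in-$\mu(a|x)$ contributions of the descendants into the linear part of $g^t_x$ converts the remaining local problem into $\min_{\mu\in\Delta_{\cAx}} h^t_x(\mu)$, since by construction $\tell^t(x,a) = \hell^t(x,a) + \sum_{x'\text{ child of }(x,a)} q^t(x')$. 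This closes the induction, and summing over the root information sets $x_1 \in \cX_1$ recovers the claim at the level of $F^t$ itself.

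There is no significant mathematical obstacle: the argument is a clean dynamic-programming rewriting. The only care required is notational bookkeeping when splitting sums along subtrees and justifying that independent minimisation over disjoint sets of conditionals yields the same result as joint minimisation, which holds because the conditionals are otherwise unconstrained and linked only through the nonnegative weights $\mu_{1:}(x)$. No convexity argument beyond the mere well-definedness of each local $\argmin$ over $\Delta_{\cAx}$ is invoked; the identity is essentially a dynamic-programming reformulation of the global update.
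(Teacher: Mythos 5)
Your proposal is correct and follows essentially the same route as the paper: both decompose the dilated objective along subtrees, use that the conditionals at distinct information sets are independent variables entering only through the nonnegative weights $\mu_{1:}(x)$, and absorb the children's minima $q^t(x')$ linearly into the local loss to form $\tell^t(x,\cdot)$. Your version merely makes the backward induction and the base case at the leaves more explicit than the paper's top-down decomposition of $q^t(x)$.
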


\begin{proof}
    First, note that $\mu^{t+1}$ is the unique minimizer associated to each $q^t(x_1)$ for $x_1$ the information set of depth $1$. Indeed, each of the sub-tree induced by the $x_1$ can be considered as an independent problem. The idea will be to recursively minimize the $q^t(x)$, starting from the leaves (i.e. the final information sets $x_H$), and compute $\mu^{t+1}(|x)$ as the associated minimizer at each information set.

    This minimization is done through, at each $x\in\cX$ of depth $h$, with a decomposition of $q^t(x)$. Indeed, separating the induced tree by $x$ between the root and the rest of the tree leads to
    \begin{align*}
        q^t(x)&
        =\argmin_{\mu\in\maxpi} \scal{\hell^t(x,\cdot)}{\mu(\cdot |x)}+(1/\alpha^t(x))\,\breg_x(\mu(\cdot |x),\mu^t(\cdot|x))\\
        &\quad+\pa{{1}/{\alpha^{t+1}(x)}-1/\alpha^t(x)}\,\breg_x(\mu(\cdot |x),\mu^1(\cdot|x))\\
        &\quad +\sum_{a\in\cAx}\mu(a|x)\sum_{x'\in\cX |\textrm{x' directly follows }(x,a)}\bra{\scal{\hell^{t,\arrowxprime}}{\mu^{\arrowxprime}_{h+1:}}+\breg^\mathrm{dil,\arrowxprime}_{\alpha^t}(\mu,\mu^t)+\breg^\mathrm{dil,\arrowxprime}_{\alpha^{t+1}-\alpha^{t}}(\mu,\mu^1)}\\\vspace{.1cm}
        &=\argmin_{\mu\in\Delta_{\cAx}}\scal{\hell^t(x,\cdot)}{\mu}+(1/\alpha^t(x))\,\breg_x(\mu,\mu^t(\cdot|x))+\pa{{1}/{\alpha^{t+1}(x)}-1/\alpha^t(x)}\,\breg_x(\mu,\mu^1(\cdot|x))\\
        &\quad +\sum_{a\in\cAx}\mu(a)\sum_{x'\in\cX |\textrm{x' directly follows }(x,a)} q^t(x')\\\vspace{.1cm}
        &=\argmin_{\mu\in\Delta_{\cAx}}\scal{\tell^t(x,\cdot)}{\mu}+(1/\alpha^t(x))\,\breg_x(\mu,\mu^t(\cdot|x))+\pa{{1}/{\alpha^{t+1}(x)}-1/\alpha^t(x)}\,\breg_x(\mu,\mu^1(\cdot|x)) \\
        &=\argmin_{\mu\in\Delta_{\cAx}} h_x(\mu)
    \end{align*}
    as each minimization on $\mu\in\maxpi$ is done on independent components. This justifies the recursive computation of both $\mu^{t+1}$ and $q^t$.
\end{proof}

This proposition directly provides the proof of correctness of \LocalOMD, for which the regularized losses at time step $t$ are non-null only on the trajectory with 
\[\tell^t(x,a)=\frac{\indic{x=x_h^t,a=a_h^t}}{\mu_{1:}^s(x)}\tell_h^t\, .\]

We now want to upper(bound the regret associated with this sequence $\mu^t$. The following lemma gives a valuable property that links the regularized loss and the estimated loss.

\begin{lemma}\label{lemma:tree_reg}
    For any policy $\mu'\in\maxpi$, we have

    \[\scal{\tell^t}{\mu'_{1:}}-\sum_{x\in\cX}\mu'_{1:}(x)q^t(x)=\scal{\hell^t}{\mu'_{1:}}-\hat{q}^t\]

    where $\hat{q}^t=\min_{\mu\in\maxpi} \scal{\hell^t}{\mu_{1:}}+\cdil_{\alpha^t}(\mu,\mu^t)+\cdil_{\alpha^{t+1}-\alpha^t}(\mu,\mu^1)$

\end{lemma}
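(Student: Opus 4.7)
The plan is to expand the left-hand side using the recursive definition of $\tell^t$, collapse the double sum over information sets using perfect recall, and then invoke the decomposability of the global objective to identify the residual sum with $\hat{q}^t$.

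First, I would plug in $\tell^t(x,a) = \hell^t(x,a) + \sum_{x' \text{ directly follows }(x,a)} q^t(x')$ to obtain
\[\scal{\tell^t}{\mu'_{1:}} = \scal{\hell^t}{\mu'_{1:}} + \sum_{(x,a)} \mu'_{1:}(x,a) \sum_{x' \text{ directly follows }(x,a)} q^t(x').\]
Under perfect recall, every $x' \in \cX$ of depth at least $2$ has a unique parent pair $(x,a)$, and the histories of $x'$ and $(x,a)$ agree on every min-player action, so $\mu'_{1:}(x') = \mu'_{1:}(x,a)$. The double sum therefore collapses to $\sum_{x' \in \cX,\,\mathrm{depth}(x') \geq 2} \mu'_{1:}(x')\, q^t(x')$. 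Subtracting $\sum_{x \in \cX} \mu'_{1:}(x) q^t(x)$ then cancels all depth-$\geq 2$ contributions and leaves $-\sum_{x_1 \in \cX_1} \mu'_{1:}(x_1)\, q^t(x_1)$. Since $\mu'_{1:}(x_1)$ is an empty product equal to $1$, this is simply $-\sum_{x_1 \in \cX_1} q^t(x_1)$.

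It remains to show $\hat{q}^t = \sum_{x_1 \in \cX_1} q^t(x_1)$. The global objective
\[F(\mu) := \scal{\hell^t}{\mu_{1:}} + \cdil_{\alpha^t}(\mu,\mu^t) + \cdil_{\alpha^{t+1}-\alpha^t}(\mu,\mu^1)\]
splits as $F(\mu)=\sum_{x_1 \in \cX_1} F_{x_1}(\mu)$, where $F_{x_1}$ aggregates only those summands whose underlying information set lies in the subtree rooted at $x_1$. Because $\mu_{1:}(x_1) = 1$ and each realization plan factorizes along its chain of min-player actions, each $F_{x_1}$ depends only on $\mu$ restricted to the subtree of $x_1$. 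The infima therefore decouple, and by construction each subtree infimum matches the definition of $q^t(x_1)$ given in the preceding proposition. This yields $\hat{q}^t = \sum_{x_1 \in \cX_1} q^t(x_1)$ and, combined with the telescoping above, the stated identity.

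The only non-routine step is the subtree decomposition of $F$; everything else is bookkeeping with the product structure of realization plans under perfect recall. The care needed there is to check that the dilated Bregman divergence $\cdil_{\alpha}(\mu,\cdot)$, whose local terms are weighted by $\alpha(x)\mu_{1:}(x)$, restricts cleanly inside each depth-1 subtree — which is immediate from $\mu_{1:}(x_1)=1$ and the factorization of $\mu_{1:}(x)$ along the min-player path from $x_1$ to $x$.
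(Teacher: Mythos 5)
Your proof is correct and follows essentially the same route as the paper's: expand $\tell^t$ by its recursive definition, use perfect recall to rewrite $\mu'_{1:}(x,a)=\mu'_{1:}(x')$ and collapse the double sum onto the non-initial information sets, and identify $\hat{q}^t=\sum_{x_1\in\cX_1}q^t(x_1)$ from the decomposition of the global objective over depth-$1$ subtrees. The only difference is that you spell out this last identification explicitly, whereas the paper dispatches it as following "by definition of the $q^t$ terms" via the preceding proposition.
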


\begin{proof}
    By definition of $\tell^t$ we have, for any $\mu\in\maxpi$
    \begin{align*}
        \scal{\tell^t}{\mu'_{1:}}&=\scal{\hell^t}{\mu'_{1:}}+\sum_{x\in\cX}\sum_{a\in\cA_x} \mu'_{1:}(x,a) \sum_{x' | (x,a)\xrightarrow{}x'}q^t({x'})\\
        &=\scal{\hell^t}{\mu'_{1:}}+\sum_{x\in\cX}\sum_{a\in\cA_x} \sum_{x' | (x,a)\xrightarrow{}x'}\mu'_{1:}(x')q^t({x'})\\
        &=\scal{\hell^t}{\mu'_{1:}}+\sum_{x'\in\cX\backslash \cX_1}\mu'_{1:}(x')q^t({x'})\\
        &=\scal{\hell^t}{\mu'_{1:}}+\sum_{x'\in\cX}\mu'_{1:}(x')q^t({x'})-\sum_{x'\in\cX_1}q^t({x'})
    \end{align*}

    in which we identified the components of the second sum as the set of non-initial information sets. We then conclude using $\sum_{x\in\cX_1}q^t({x})=\hat{q}^t$ by definition of the $q^t$ terms.
\end{proof}

This lemma is then used to upper bound the estimated regret of the sequence generated by the updates \eqref{eq:gds-omd dilated}. Indeed, while we could apply Theorem~\ref{thm:general_entropy}, the associated stability term, which depends on the Fenchel dual of the dilated entropy, is not easy to upper bound. Nonetheless, the proof of the following theorem is mostly the same but with a slightly different definition of the stability and penalty terms.

\begin{theorem}\label{thm:regret_dilated}
Let $(\mu^t)_{t\in [T]}$ be the sequence of policies generated by the updates \eqref{eq:gds-omd dilated}. The following bound holds

\[\hat{R}^T\leq \underbrace{\sup_{\mu^\dagger\in\maxpi}\breg^\mathrm{\mathrm{dil}}_{\alpha^T}(\mu_{1:}^\dagger,\mu_{1:}^1)}_{\mathrm{penalty}}+\underbrace{\sum_{t=1}^T\sum_{x\in\cX}\alpha^t(x)\mu_{1:}^t(x)\breg^\star_x\pa{\nabla\Psi_x(\mu_{1:}^t(\cdot | x))-\frac{1}{\alpha^t(x)}{\tell}^t(x,\cdot),\nabla\Psi_x(\mu_{1:}^t(\cdot | x))}}_{\mathrm{stability}}\, .\]
\end{theorem}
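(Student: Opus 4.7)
I plan to mirror the structure of Theorem~\ref{thm:general_entropy}: split the estimated regret into a penalty and a stability term, handle the penalty by a direct global telescoping on dilated Bregman divergences, and reduce the stability to a sum of local GDS-OMD residuals (one per information set) via Lemma~\ref{lemma:tree_reg}. Concretely, for any $\mu^\dagger\in\maxpi$ I write
\[
\sum_{t=1}^T \scal{\hell^t}{\mu_{1:}^t - \mu_{1:}^\dagger} \;=\; \sum_{t=1}^T\bra{\hat{q}^t - \scal{\hell^t}{\mu_{1:}^\dagger}} \;+\; \sum_{t=1}^T\bra{\scal{\hell^t}{\mu_{1:}^t} - \hat{q}^t},
\]
where $\hat{q}^t$ is the optimum of \eqref{eq:gds-omd dilated} at round $t$ (the constant appearing in Lemma~\ref{lemma:tree_reg}), attained at $\mu^{t+1}$.

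For the penalty sum, I reuse the law-of-cosines argument from the penalty portion of the proof of Theorem~\ref{thm:general_entropy}, applied to the two dilated Bregman divergences $\breg^\mathrm{dil}_{\alpha^t}(\mu_{1:}^\dagger,\mu_{1:}^t)$ and $\breg^\mathrm{dil}_{\alpha^{t+1}-\alpha^t}(\mu_{1:}^\dagger,\mu_{1:}^1)$. The first-order optimality of $\mu^{t+1}$, together with additivity of the dilated divergence in its weight vector ($\breg^\mathrm{dil}_{\alpha+\beta} = \breg^\mathrm{dil}_{\alpha} + \breg^\mathrm{dil}_{\beta}$), combines the two identities into
\[
\hat{q}^t - \scal{\hell^t}{\mu_{1:}^\dagger} \;\le\; \breg^\mathrm{dil}_{\alpha^t}(\mu_{1:}^\dagger,\mu_{1:}^t) + \breg^\mathrm{dil}_{\alpha^{t+1}-\alpha^t}(\mu_{1:}^\dagger,\mu_{1:}^1) - \breg^\mathrm{dil}_{\alpha^{t+1}}(\mu_{1:}^\dagger,\mu_{1:}^{t+1}),
\]
and telescoping over $t$ with the convention $\alpha^{T+1}=\alpha^T$ collapses the right-hand side to $\breg^\mathrm{dil}_{\alpha^T}(\mu_{1:}^\dagger,\mu_{1:}^1)$, which is the penalty term of the theorem.

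For the stability sum, Lemma~\ref{lemma:tree_reg} applied at $\mu' = \mu^t$ rewrites the residual as a weighted sum of local ones,
\[
\scal{\hell^t}{\mu_{1:}^t} - \hat{q}^t \;=\; \sum_{x\in\cX} \mu_{1:}^t(x)\bra{\scal{\tell^t(x,\cdot)}{\mu^t(\cdot|x)} - q^t(x)}.
\]
By the recursive factorization stated in the preceding Proposition, each local value $q^t(x)$ is the optimum of a genuine GDS-OMD step over the simplex $\Delta_{\cAx}$, with dual iterate $\tell^t(x,\cdot)$ and regularizer $\alpha^t(x)\Psi_x$; the convex increment $(\alpha^{t+1}(x)-\alpha^t(x))\Psi_x$ is nonnegative by the assumed local monotonicity of the $\alpha^t(x)$. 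Applying the stability half of the proof of Theorem~\ref{thm:general_entropy} verbatim at this local problem, together with the Fenchel-scaling identity $\breg_{(c\Psi)^\star}(\xi_1,\xi_2) = c\,\breg_{\Psi^\star}(\xi_1/c,\xi_2/c)$ at $c=\alpha^t(x)$, produces exactly the per-information-set bound
\[
\scal{\tell^t(x,\cdot)}{\mu^t(\cdot|x)} - q^t(x) \;\le\; \alpha^t(x)\,\breg^\star_x\!\pa{\nabla\Psi_x(\mu^t(\cdot|x)) - \tfrac{1}{\alpha^t(x)}\tell^t(x,\cdot),\,\nabla\Psi_x(\mu^t(\cdot|x))}.
\]
Weighting by $\mu_{1:}^t(x)$, summing over $x$ and $t$, and combining with the penalty bound before taking $\sup$ over $\mu^\dagger$ yields the claim.

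The main technical subtlety I anticipate is the local reduction: one must verify that the objective $h_x^t$ arising from recursively minimizing \eqref{eq:gds-omd dilated} really is a GDS-OMD step with the correct regularizer scaling, so that the reciprocal $1/\alpha^t(x)$ inside $\breg^\star_x$ emerges cleanly from the conjugate-scaling relation. Once that point is settled, both the penalty telescoping and the local stability reduction are essentially mechanical copies of the corresponding steps in the proof of Theorem~\ref{thm:general_entropy}.
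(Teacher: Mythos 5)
Your proposal is correct and follows essentially the same route as the paper's proof: the same penalty/stability decomposition around the constrained optimum $\hat{q}^t$, the same law-of-cosines telescoping for the penalty (the paper makes the first-order optimality explicit via a Lagrange term $g^t\in Q_{\mathrm{max}}^\perp$ orthogonal to the treeplex), and the same localization of the stability term via Lemma~\ref{lemma:tree_reg} followed by a per-information-set comparison with the unconstrained local OMD iterate and the conjugate-scaling identity. The subtlety you flag about the local reduction is exactly the point the paper's preceding Proposition settles, so nothing is missing.
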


\begin{proof}
    The separation between the stability and the penalty terms is the same as in Theorem~\ref{thm:general_entropy}, but with $\hat{q}^t$ (of Lemma~\ref{lemma:tree_reg}) defined after the projection rather than before. This leads to the decomposition
    \[\hat{\cR}^T= \underbrace{\max_{\mu^\dagger\in\maxpi}\sum_{t=1}^T\pa{\hat{q}^t-\scal{\hell^t}{\mu^\dagger_{1:}}}}_{\mathrm{penalty}}+\underbrace{\sum_{t=1}^T\pa{\scal{\hell^t}{\mu_{1:}^t}-\hat{q}^t}}_{\mathrm{stability}}\, .\]

    \textit{Penalty term: }This part is similar to the general theorem. The optimality of $\mu^{t+1}$ leads to, for any $t\in[T]$,
    \[\nabla\Psi^{t+1}(\mu_{1:}^{t+1})=-\hell^t-g^t+\nabla\Psi^t(\mu_{1:}^t)+\nabla\varphi^t(\mu_{1:}^1)\,.\]
    where $g^t\in Q_\mathrm{max}^\perp$ and $\varphi^t=\Psi^{t+1}-\Psi^t$. We use the same two law of cosines as in Theorem~\ref{thm:general_entropy}
    
    \begin{align*}
        \breg_{\Psi^t}(\mu_{1:}^\dagger,\mu_{1:}^t)&=\breg_{\Psi^t}(\mu_{1:}^\dagger,\mu_{1:}^{t+1})+\breg_{\Psi^t}(\mu_{1:}^{t+1},\mu_{1:}^t)-\scal{\nabla\Psi^t(\mu_{1:}^t)-\nabla\Psi^t(\mu_{1:}^{t+1})}{\mu_{1:}^\dagger-\mu_{1:}^{t+1}}\\
        \breg_{\varphi^t}(\mu_{1:}^\dagger,\mu_{1:}^1)&=\breg_{\varphi^t}(\mu_{1:}^\dagger,\mu_{1:}^{t+1})+\breg_{\varphi^t}(\mu_{1:}^{t+1},\mu_{1:}^1)-\scal{\nabla\varphi^t(\mu_{1:}^1)-\nabla\varphi^t(\mu_{1:}^{t+1})}{\mu_{1:}^\dagger-\mu_{1:}^{t+1}}
    \end{align*}
    which yields by summing

    \begin{align*}
        \breg_{\Psi^t}&(\mu_{1:}^\dagger,\mu_{1:}^t)+\breg_{\varphi^t}(\mu_{1:}^\dagger,\mu_{1:}^1)\\
        &=\breg_{\Psi^{t+1}}(\mu_{1:}^\dagger,\mu_{1:}^{t+1})+\breg_{\Psi^t}(\mu_{1:}^{t+1},\mu_{1:}^t)+\breg_{\varphi}(\mu_{1:}^{t+1},\mu_{1:}^1)-\scal{\hell^t+g^t}{\mu_{1:}^\dagger-\mu_{1:}^{t+1}}\\
        &=\breg_{\Psi^{t+1}}(\mu_{1:}^\dagger,\mu_{1:}^{t+1})+\hat{q}^t-\scal{\hell^t}{\mu_{1:}^\dagger}
    \end{align*}

    where we used $\scal{g^t}{\mu_{1:}^\dagger-\mu_{1:}^{t+1}}=0$ from the orthogonality. Summing over $t\in[T]$ then gives, by telescoping similarly to the general theorem,

    \begin{align*}
        \sum_{t=1}^T\pa{\hat{q}^t-\scal{\hell^t}{\mu_{1:}^\dagger}}&=\sum_{t=1}^T\pa{\breg_{\Psi^t}(\mu_{1:}^\dagger,\mu_{1:}^t)+\breg_{\varphi^t}(\mu_{1:}^\dagger,\mu_{1:}^1)-\breg_{\Psi^{t+1}}(\mu_{1:}^\dagger,\mu_{1:}^{t+1})}\\
        &= \breg_{\Psi^{T+1}}(\mu_{1:}^\dagger,\mu_{1:}^1)-\breg_{\Psi^{T+1}}(\mu_{1:}^\dagger,\mu_{1:}^{t+1})\\
        &\leq \breg_{\Psi^T}(\mu_{1:}^\dagger,\mu_{1:}^1)
    \end{align*}

    \textit{Stability term:} From Lemma~\ref{lemma:tree_reg} used with $\mu'=\mu^t$, we get an alternative expression of the stability term
    \[\scal{\hell^t}{\mu^t_{1:}}-\hat{q}^t=\scal{\tell^t}{\mu^t_{1:}}-\sum_{x\in\cX}\mu^t_{1:}(x)q^t(x)\]
    This shows the stability term can be decomposed in a positive linear combination
    \[\scal{\hell^t}{\mu^t_{1:}}-\hat{q}^t=\sum_{x\in\cX} \mu^t_{1:}(x) \bra{\scal{\tell^t(x,\cdot)}{\mu^t(\cdot|x)}-q^t(x)}\]
    and we will individually upperbound each of the terms of the combination. The method is again similar to the general theorem, but locally with the regularized loss. Defining $\Psi^t_x:=\alpha^t(x)\Psi_x$ and $\varphi^t_x:=\Psi^{t+1}_x-\Psi^t_x$, we have
    \begin{align*}
        &\scal{\tell^t(x,\cdot)}{\mu^t(\cdot|x)}-q^t(x) \\
        &\qquad=\scal{\tell^t(x,\cdot)}{\mu^t(\cdot|x)-\mu^{t+1}(\cdot|x)}-\breg_{\Psi_x^t}(\mu^{t+1}(\cdot|x),\mu^t(\cdot|x))-\breg_{\varphi_x^t}(\mu^{t+1}(\cdot|x),\mu^1(\cdot|x))\\
        &\qquad\leq \scal{\tell^t(x,\cdot)}{\mu^t(\cdot|x)-\mu^{t+1}(\cdot|x)}-\breg_{\Psi_x^t}(\mu^{t+1}(\cdot|x),\mu^t(\cdot|x))\\
        &\qquad\leq\scal{\tell^t(x,\cdot)}{\mu^t(\cdot|x)-\tilde{\mu}^{t+1}(\cdot|x)}-\breg_{\Psi_x^t}(\tilde{\mu}^{t+1}(\cdot|x),\mu^t(\cdot|x))
    \end{align*}
    where 
    \[\tilde{\mu}^{t+1}(\cdot|x):=\argmin_{\tilde{\mu}\in \Omega_x}\bra{\scal{\tell^t(x,\cdot)}{\tilde{\mu}}+\breg_{\Psi_x^t}(\tilde{\mu},\mu^t(\cdot|x))}\]
    By optimality, $\tilde{\mu}^{t+1}(\cdot|x)$ verifies
    \[\nabla\Psi_x^t(\tilde{\mu}^{t+1}(\cdot|x))=\nabla\Psi_x^t(\mu^t(\cdot|x))-\tell^t(x,\cdot)\]
    and the law of cosines yields
    \begin{align*}
        0&=\breg_{\Psi_x^t}(\mu^t(\cdot|x),\mu^t(\cdot|x))\\
        &=\breg_{\Psi_x^t}(\mu^t(\cdot|x),\tilde{\mu}^{t+1}(\cdot|x))+\breg_{\Psi_x^t}(\tilde{\mu}^{t+1}(\cdot|x),\mu^t(\cdot|x))-\\
        &\qquad\qquad\qquad\scal{\nabla\Psi_x^t(\mu^t(\cdot|x))-\nabla\Psi_x^t(\tilde{\mu}^{t+1}(\cdot|x))}{\mu^t(\cdot|x)-\tilde{\mu}^{t+1}(\cdot|x)}\\
        &= \breg_{\Psi_x^t}(\mu^t(\cdot|x),\tilde{\mu}^{t+1}(\cdot|x))+\breg_{\Psi_x^t}(\tilde{\mu}^{t+1}(\cdot|x),\mu^t(\cdot|x))-\scal{\tell^t(x,\cdot)}{\mu^t(\cdot|x)-\tilde{\mu}^{t+1}(\cdot|x)}
    \end{align*}
    Plugging this in the first inequality, we directly get
    \[\scal{\tell^t(x,\cdot)}{\mu^t(\cdot|x)}-q^t(x)\leq \breg_{\Psi_x^t}(\mu^t(\cdot|x),\tilde{\mu}^{t+1}(\cdot|x))\]
    and we get the individual upper bounds with
    \begin{align*}
        \breg_{\Psi_x^t}(\mu^t(\cdot|x),\tilde{\mu}^{t+1}(\cdot[x))&=\alpha^t(x)\breg_{\Psi_x}(\mu^t(\cdot|x),\tilde{\mu}^{t+1}(\cdot[x))\\
        &=\alpha^t(x)\breg_{\Psi_x^{\star}}(\nabla\Psi_x(\tilde{\mu}^{t+1}(\cdot[x)),\nabla\Psi_x(\mu^t(\cdot|x)))\\
        &=\alpha^t(x)\breg_{\Psi_x^{\star}}\pa{\nabla\Psi_x(\mu^t(\cdot|x))-\frac{1}{\alpha^t(x)}\tell^t(x,\cdot),\nabla\Psi_x(\mu^t(\cdot|x))}
    \end{align*}
\end{proof}

This upper bound on the estimated regret is then used with the learning rates considered in the main article.

\subsection{Optimal rates analysis}

We first consider the optimal rates of the main paper.
\begin{theorem}\label{thm:optimal_main}
    Using \LocalOMD with $\mu^1$ as the uniform policy, with the learning rates $\eta^t(x)= {\eta}/{\kappa(\mu^s | x)}$ where $\eta=\sqrt{\log(A){\kappa(\mu^s)/}{(3HT)}}$, and with $\Psi_x$ the Shannon entropy $\Psi_x(\mu)=\sum_{a\in\cAx}\mu(a)\log(\mu(a))$, the regret is bounded with a probability at least $1-\delta$ by 
    \[\regret^T_{\mathrm{min}}\leq \pa{4+2\sqrt{3}}\,H^{3/2}\sqrt{\log(A)\iota\kappa(\mu^s) T}\quad\textrm{where}\quad \iota=\log(2(\AX+1)/\delta)\,.\]
\end{theorem}

\begin{proof}
    We apply Theorem~\ref{thm:regret_dilated}, using the relations $\alpha^t(x)=1/(\mu^s_{1:}(x)\eta^t(x))$ and $\indic{x=x_h^t}\tell_h^t=\mu^s_{1:}(x)\tell^t(x,\cdot)$. We again separately bound the penalty and stability terms.

    \textit{Penalty term :} We will denote by $\mathrm{PEN}$ this term defined by
    \[\mathrm{PEN}:=\sup_{\mu^\dagger\in\maxpi}\breg^\mathrm{\mathrm{dil}}_{\alpha^T}(\mu_{1:}^\dagger,\mu_{1:}^1)\, .\]
    By definition of the dilated entropy, we have, using that $\mu^1$ is the uniform policy and that the Bregman divergence of the Shannon entropy is the Kullback-Leibler divergence,
    \begin{align*}
    \mathrm{PEN}&=\sup_{\mu^\dagger\in\maxpi}\sum_{x\in\cX}\frac{\mu^\dagger_{1:}(x)\kappa(\mu^s|x)}{\eta\mu^s_{1:}(x)}\breg_{\Psi}(\mu^\dagger(\cdot|x),\mu^1(\cdot|x))\\
    &=\frac{1}{\eta}\sup_{\mu^\dagger\in\maxpi}\sum_{x\in\cX}\frac{\mu^\dagger_{1:}(x)\kappa(\mu^s|x)}{\mu^s_{1:}(x)}\sum_{a\in\cAx}\mu^\dagger(a|x)\log(\mu^\dagger(a|x)/\mu^1(a|x))\\
    &\leq\frac{1}{\eta}\sup_{\mu^\dagger\in\maxpi}\sum_{x\in\cX}\frac{\mu^\dagger_{1:}(x)\kappa(\mu^s|x)}{\mu^s_{1:}(x)}\sum_{a\in\cAx}\mu^\dagger(a|x)\log(1/\mu^1(a|x))\\
    &\leq \frac{\log(A)}{\eta}\sup_{\mu^\dagger\in\maxpi}\sum_{x\in\cX}\frac{\mu^\dagger_{1:}(x)}{\mu^s_{1:}(x)}\kappa(\mu^s|x)\\
    &=\frac{\log(A)}{\eta}\sup_{\mu^\dagger\in\maxpi}\sum_{h=1}^H\sum_{x\in\cX_h}\frac{\mu^\dagger_{1:}(x)}{\mu^s_{1:}(x)}\sup_{\mu'\in\maxpi}\sum_{x'\in\cX | x \textrm{ is in the history of }x'}\sum_{a'\in\cA(x')}\frac{\mu'_{h:}(x',a')}{\mu^s_{h:}(x',a')}\\
    &\leq \frac{\log(A)}{\eta}\sum_{h=1}^H\sup_{\mu^\dagger\in\maxpi}\sum_{x\in\cX_h}\frac{\mu^\dagger_{1:}(x)}{\mu^s_{1:}(x)}\sup_{\mu'\in\maxpi}\sum_{x'\in\cX | x \textrm{ is in the history of }x'}\sum_{a'\in\cA(x')}\frac{\mu'_{h:}(x',a')}{\mu^s_{h:}(x',a')}\\
    (\textrm{by independance}) &= \frac{\log(A)}{\eta}\sum_{h=1}^H\sup_{\mu^\dagger\in\maxpi}\sum_{x\in\cX_h}\frac{\mu^\dagger_{1:}(x)}{\mu^s_{1:}(x)}\sum_{x'\in\cX | x \textrm{ is in the history of }x'}\sum_{a'\in\cA(x')}\frac{\mu^\dagger_{h:}(x',a')}{\mu^s_{h:}(x',a')}\\
   &= \frac{\log(A)}{\eta}\sum_{h=1}^H\sup_{\mu^\dagger\in\maxpi}\sum_{x\in\cX_h}\sum_{x'\in\cX | x \textrm{ is in the history of }x'}\sum_{a'\in\cA(x')}\frac{\mu^\dagger_{1:}(x',a')}{\mu^s_{1:}(x',a')}\\
    &=\frac{\log(A)}{\eta}\sum_{h=1}^H\sup_{\mu^\dagger\in\maxpi}\sum_{x'\in\cX}\sum_{a'\in\cA(x')}\frac{\mu^\dagger_{1:}(x',a')}{\mu^s_{1:}(x',a')}\\
    &=\frac{\log(A)}{\eta}H\kappa(\mu^s)
    \end{align*}
    where $\cX_h$ is the set of information sets of depth $h$, the two sums being later merged on the basis of perfect recall. We now look at the stability term.

    \textit{Stability term :} We will denote by $\mathrm{STA}$ this term defined by
    \[\mathrm{STA}:=\sum_{t=1}^T\sum_{x\in\cX}\alpha^t(x)\mu_{1:}^t(x)\breg^\star_x\pa{\nabla\Psi_x(\mu_{1:}^t(\cdot | x))-\frac{1}{\alpha^t(x)}{\tell}^t(x,\cdot),\nabla\Psi_x(\mu_{1:}^t(\cdot | x))}\]

    We first look at an upper-bound of $\breg^\star_x\pa{\nabla\Psi_x(\mu_{1:}^t(\cdot | x))-\frac{1}{\alpha^t(x)}{\tell}^t(x,\cdot),\nabla\Psi_x(\mu_{1:}^t(\cdot | x))}$. In order to do so, we upper-bound (in the symmetric matrix sense) the Hessian of $\Psi^\star_x$ on $I:=\left\{\nabla\Psi_x(\mu_{1:}^t(\cdot | x))-\frac{\gamma}{\alpha^t(x)}{\tell}^t(x,\cdot)\middle| \gamma\in[0,1]\right\}$.

    Because $\Psi_x (\mu)= \sum_{a\in\cAx} \mu(a)\log(\mu(a))$ is the Shannon entropy,
    \[\nabla\Psi_x(\mu)(a)=\log(\mu(a))+1 \quad\textrm{and thus} \quad \nabla\Psi^\star_x(y)(a)=\exp(y(a)-1)\]
    and the Hessian of $\Psi_x^\star$ is given by
    \[\nabla^2\Psi^\star(y)=\textrm{Diag}\{\exp(y(a)-1)\}_{a\in\cAx}\, .\]
    In particular, it is upper bounded on $I$ by the matrix $D_\mu$ defined by
    \[D_\mu:=\textrm{Diag}\{\mu(a)\}_{a\in\cAx}\]
    This yields that 
    \begin{align*}
        \breg^\star_x\pa{\nabla\Psi_x(\mu_{1:}^t(\cdot | x))-\frac{1}{\alpha^t(x)}{\tell}^t(x,\cdot),\nabla\Psi_x(\mu_{1:}^t(\cdot | x))}&\leq  \frac{1}{2}\norm{\frac{1}{\alpha^t(x)}{\tell}^t(x,\cdot)}^2_{D_\mu^t(\cdot|x)}\\
        &=\frac{1}{2\alpha^t(x)^2}\sum_{a\in\cAx}\mu^t(a|x)\tell^t(x,a)^2
    \end{align*}
    which leads to
    \begin{align*}
        \mathrm{STA}&\leq \sum_{t=1}^T\sum_{x\in\cX}\frac{\mu^t_{1:}(x)}{2\alpha^t(x)}\sum_{a\in\cAx}\mu^t(a|x)\tell^t(x,a)^2\\
        &=\frac{\eta}{2}\sum_{t=1}^T\sum_{x\in\cX}\indic{x=x_h^t}\frac{\mu^t_{1:}(x)}{\mu^s_{1:}(x)}\frac{1}{\kappa(\mu^s|x)}\sum_{a\in\cA(x)}\indic{a=a_h^t}\mu^t(a|x)\tell_h^t(a)^2\, .
    \end{align*}
    We can first notice from recursively comparing the minimizer $\mu^{t+1}(\cdot|x_h^t)$ with $\mu^t(\cdot|x_h^t)$ that the regularized loss $\tell_h^t(a_h^t)$,  satisfies 
    \[\tell_h^t(a_h^t)\leq \scal{\hell^{t,\arrowx}}{\mu^{t,\arrowx}_{h+1:}}\, ,\]
    re-using the notation at the beginning of the section, because the regularization does not evolve with time. The difficulty is now to upper bound $\mathrm{STA}$ with high probability. In order to do so, we use the Lemma~\ref{lemma_concentration} on the sequence $(U^t)_{t\in[T]}$ defined by
    \[U^t:=\sum_{x\in\cX}\indic{x=x_h^t}\frac{\mu^t_{1:}(x)}{\mu^s_{1:}(x)}\frac{1}{\kappa(\mu^s|x)}\sum_{a\in\cA(x)}\indic{a=a_h^t}\mu^t(a|x)\tell_h^t(a)^2\]
    with $\gamma'=\gamma\in (0,1/(H^2\kappa(\mu^s))]$ and $\delta'=\delta/2$.
    This yields with probability at least $1-\delta/2$
    \[\sum_{t=1}^T U^t \leq\sum_{t=1}^T \E\bra{U^t\middle|\salgebra^{t-1}}+\gamma\sum_{t=1}^T\E\bra{(U^t)^2 \middle| \salgebra^{t-1}}+\iota/\gamma\, .\]

    On the one hand, we have, using $\hell_h^t(a_h^t)\leq \kappa(\mu^s|x)$ and the previous inequality that
    \begin{align*}
        \E\bra{U^t\middle|\salgebra^{t-1}}&\leq \sum_{x\in\cX}p^t(x)\mu^t(x)\sum_{a\in\cAx}\scal{\ell^{t,\arrowx}}{\mu^{t,\arrowx}_{h:}}\\
        &\leq \sum_{x\in\cX}p^t(x)\mu^t(x)H\\
        &\leq H^2\, .
    \end{align*}

    On the other hand, using the same inequality,
    \begin{align*}
        \E\bra{(U^t)^2 \middle| \salgebra^{t-1}}&=\E\bra{\pa{\sum_{x\in\cX}\indic{x=x_h^t}\frac{\mu^t_{1:}(x)}{\mu^s_{1:}(x)}\sum_{a\in\cA(x)}\indic{a=a_h^t}\scal{\hell_h^t(a)}{\mu^t(a|x)}}^2\middle| \salgebra^{t-1}}\\
        &\leq H\E\bra{\sum_{x\in\cX}\indic{x=x_h^t}\frac{\mu^t_{1:}(x)^2}{\mu^s_{1:}(x)^2}\sum_{a\in\cA(x)}\indic{a=a_h^t}\scal{\hell_h^t(a)^2}{\mu^t(a|x)^2}\middle| \salgebra^{t-1}}\\
        &\leq H\kappa(\mu^s)\E\bra{\sum_{x\in\cX}\indic{x=x_h^t}\frac{\mu^t_{1:}(x)}{\mu^s_{1:}(x)}\sum_{a\in\cA(x)}\indic{a=a_h^t}\scal{\hell_h^t(a)}{\mu^t(a|x)}\middle| \salgebra^{t-1}}\\
        &\leq H\kappa(\mu^s)\sum_{x\in\cX}p^t(x)\mu^t(x)\sum_{a\in\cAx}\scal{\ell^{t,\arrowx}}{\mu^{t,\arrowx}_{h:}}\\
        &\leq H\kappa(\mu^s)\sum_{x\in\cX}p^t(x)\mu^t(x) H\\
        &\leq H^3\kappa(\mu^s)\, .
    \end{align*}

    The following upper bound on the stability term thus holds
    \[\mathrm{STA} \leq \eta\pa{H^2T+\gamma H^3\kappa(\mu^s)T+\frac{\iota}{\gamma}}\, .\]
    Taking $\gamma=1/(H^2\kappa(\mu^s))$, we obtain
    \[\mathrm{STA} \leq \eta\pa{2H^2T+H^2\iota \kappa(\mu^s)}\]

    As the bound of the theorem trivially holds if $T< \iota\kappa(\mu^s)$ (the regret being bounded by $T$ anyway), we even have assuming $T\geq \iota\kappa(\mu^s)$
    \[\mathrm{STA} \leq 3\eta H^2 T\, .\]

    \textit{Conclusion: }Combining all the previous bounds, the estimated regret is bounded, with a probability of at least $1-\delta/2$ by
    \[\hat{\regret}^T\leq \frac{\log(A)}{\eta}H\kappa(\mu^s)+3\eta H^2 T\, .\]
    Taking $\eta=\sqrt{\log(A){\kappa(\mu^s)/}{(3HT)}}$, we obtain 
    \[\hat{\regret}^T\leq 2\sqrt{3}\,H^{3/2}\sqrt{\log(A)\iota\kappa(\mu^s) T}\, .\]
    We finally conclude by combining this bound with Theorem~\ref{thm:estimation} for the true regret, using $\delta'=\delta/2$, such that the two inequalities hold with a probability at least $1-\delta$.
\end{proof}

\subsection{Adaptive rates analysis}

We end this appendix by considering the adaptive setting. We will assume that all regularizers $\Psi_x$ are 1-strongly convex with respect to some norms $\norm{\cdot}_x$, and we will define 
\begin{align*}
    C_\Psi:&=\sup_{x\in\cX,\mu\in\Delta_{\Ax}}\breg_x(\mu,\mu^1(\cdot|x))\\
    C^\star_\Psi:&=\sup_{x\in\cX,a\in\cA_x}\norm{\indic{x,a}}^\star_{x}
\end{align*}
where $\mu^1$ is the initial policy considered in the algorithm and $\indic{x,a}$ is the loss vector $\ell(x,\cdot)$ equal to $1$ for $a\in\cAx$ and $0$ for $a'\in\cAx\backslash \left\{a\right\}$. The following theorem is the formal statement of Theorem~\ref{thm:adaptive_main} in the main article. While being quite general, the upper bound is unsurprisingly not as tight as the previous one.

\begin{theorem}\label{thm:formal_adaptive}

    With such regularizers, assume that the learning rates are locally decreasing and let $\lambda_1,\lambda_2\in\R_{>0}$ be two constants such that for all information set $x\in\cX$,
    \[\max_{t\in[T-1]}\bra{\frac{1}{\eta^{t+1}(x)}-\frac{1}{\eta^t(x)}}\leq\lambda_1 \quad\mathrm{and}\quad 1/\eta^T(x)+\sum_{t=1}^T\eta^t(x)\indic{x=x_h^t}\leq \lambda_2\sqrt{T}\]
    
    Then with a probability at least $1-\delta$, the regret of Algorithm~\ref{alg:omd_fixed}  is upper-bounded by
    \[\regret^T_{\mathrm{max}}\leq\bra{2\bra{\pa{1+\lambda_1}C_\Psi C_\Psi^\star\kappa(\mu^s)}^2 \lambda_2\abs{\cX}+4\sqrt{H\kappa(\mu^s)\iota}}\sqrt{T}\]
    where $\iota=\log(({\AX+1})/{\delta})$.
\end{theorem}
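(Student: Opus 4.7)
The plan is to adapt the two-step structure of the proof of Theorem~\ref{thm:optimal_main} to adaptive learning rates. First, I would apply Theorem~\ref{thm:regret_dilated} to decompose the estimated regret $\hat{\regret}^T$ into a penalty term and a stability term, and then Theorem~\ref{thm:estimation} to pass from the estimated to the true regret, paying the additive $4\sqrt{H\kappa(\mu^s)\iota T}$ that appears in the statement. The remaining task is to bound the penalty and the stability using only the two structural hypotheses on the rates.

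For the penalty, using $\alpha^T(x)=1/(\mu^s_{1:}(x)\eta^T(x))$, the uniform bound $\breg_x(\cdot,\mu^1(\cdot|x))\leq C_\Psi$, and the hypothesis $1/\eta^T(x)\leq \lambda_2\sqrt{T}$, I would get
\[\mathrm{PEN}\leq C_\Psi\lambda_2\sqrt{T}\sup_{\mu^\dagger\in\maxpi}\sum_{x\in\cX}\frac{\mu^\dagger_{1:}(x)}{\mu^s_{1:}(x)}\,.\]
A Cauchy--Schwarz argument shows $\sum_{a\in\cAx}\mu^\dagger(a|x)/\mu^s(a|x)\geq 1$, so this supremum is at most $\kappa(\mu^s)$. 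The penalty is therefore of order $C_\Psi\kappa(\mu^s)\lambda_2\sqrt{T}$, which is lower-order and absorbed by the stability contribution.

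For the stability, 1-strong convexity of each $\Psi_x$ w.r.t.\ $\norm{\cdot}_x$ yields $\breg^\star_x(\xi_1,\xi_2)\leq \tfrac12\norm{\xi_1-\xi_2}_x^{\star,2}$, so the $x$-th contribution at round $t$ is at most $\tfrac{1}{2}\mu^t_{1:}(x)\mu^s_{1:}(x)\eta^t(x)\norm{\tell^t(x,\cdot)}_x^{\star,2}$. Since $\tell^t(x,\cdot)$ vanishes unless $x=x_h^t$ and is then supported on the single action $a_h^t$, its dual norm is bounded by $C_\Psi^\star |\tell_h^t(a_h^t)|/\mu^s_{1:}(x)$. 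Plugging $\mu^t(\cdot|x)$ into the minimization defining $q^t(x)$ yields a recursive control of both $q^t$ and $\tell_h^t$, in which the incremental stabilization $(1/\eta^{t+1}(x)-1/\eta^t(x))\breg_x(\cdot,\mu^1)$ contributes a $\lambda_1 C_\Psi$ correction and the importance-sampling factors are packaged into $\kappa(\mu^s)$ via the ratios $\mu^t_{1:}(x)/\mu^s_{1:}(x)$ and $1/\mu^s(a_h^t|x_h^t)$. Applying the second hypothesis $\sum_t \eta^t(x)\indic{x=x_h^t}\leq \lambda_2\sqrt{T}$ at each fixed $x$, and summing over $x\in\cX$, then produces the $|\cX|\lambda_2\sqrt{T}$ factor together with the squared coefficient $[(1+\lambda_1)C_\Psi C_\Psi^\star\kappa(\mu^s)]^2$.

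The main obstacle I anticipate is the recursive control of $\tell_h^t$ and of the intermediate $q^t_{h+1}$ terms that appear in its definition. Unlike in Theorem~\ref{thm:optimal_main}, where constant local learning rates let the dilated Bregman divergence telescope cleanly through the subtree, here the increment $1/\eta^{t+1}(x)-1/\eta^t(x)$ is generically nonzero and must be absorbed depth by depth without producing extra powers of $H$; packaging these recursion residuals into the clean factor $(1+\lambda_1)$ is the technical core of the argument. A high-probability concentration step in the spirit of Lemma~\ref{lemma_concentration} will likely be required to upgrade the resulting in-expectation control of $(\tell_h^t)^2$ into a with-probability-$1-\delta$ bound, and to combine the control of the estimated regret with Theorem~\ref{thm:estimation} through a union bound.
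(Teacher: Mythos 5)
Your proposal follows essentially the same route as the paper: decompose the estimated regret via Theorem~\ref{thm:regret_dilated}, bound the penalty by $C_\Psi\lambda_2\kappa(\mu^s)\sqrt{T}$, bound the stability by recursively controlling $\tell_h^t(a_h^t)\le(1+\lambda_1 C_\Psi)\kappa(\mu^s|x_h^t)$ (the paper's Lemma~\ref{lemma:regularized_bound}) together with $1$-strong convexity, $C_\Psi^\star$ and the hypothesis $\sum_t\eta^t(x)\indic{x=x_h^t}\le\lambda_2\sqrt{T}$, and finish with Theorem~\ref{thm:estimation}. The only point you over-anticipate is the final concentration step: no analogue of Lemma~\ref{lemma_concentration} is needed for the stability term here, because the bound on $\tell_h^t$ is a deterministic worst-case bound through the importance-sampling denominators (which is precisely why $\iota=\log((\AX+1)/\delta)$ rather than $\log(2(\AX+1)/\delta)$ as in Theorem~\ref{thm:optimal_main}), so the sole probabilistic event comes from Theorem~\ref{thm:estimation}.
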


The proof of this theorem will be based on the following lemma that bounds the regularized loss using the $\lambda_1$ constant above.

\begin{lemma}\label{lemma:regularized_bound}
    For all $t\in[T]$ and $h\in[H]$, 
    \[\tell_h^t(a_h^t)\leq (1+\lambda_1 C_\Psi)\kappa(\mu^s|x_h^t)\, .\]
\end{lemma}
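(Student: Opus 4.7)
My plan is to prove the lemma by downward induction on the depth $h$, relying only on the recursive definitions of $\tell_h^t$ and $q_h^t$ along the visited trajectory together with the $\kappa$ recursion \eqref{rec_kappa}. As a preliminary observation I would verify that every $q_h^t \geq 0$: we have $q_{H+1}^t = 0$ by convention, and inductively $h^t_x(\mu) \geq 0$ on $\Delta_{\cA(x)}$ because $\tell_h^t(a_h^t) = (\ell_h^t + q_{h+1}^t)/\mu^s(a_h^t|x_h^t) \geq 0$ (by induction) while both Bregman terms carry non-negative coefficients (the learning rates being locally non-increasing).

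The base case $h=H$ is immediate: $q_{H+1}^t = 0$ gives $\tell_H^t(a_H^t) = \ell_H^t/\mu^s(a_H^t|x_H^t) \leq 1/\mu^s(a_H^t|x_H^t)$, which is at most $\kappa(\mu^s|x_H^t)$ by \eqref{rec_kappa} specialized at a leaf, hence at most $(1+\lambda_1 C_\Psi)\kappa(\mu^s|x_H^t)$.

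For the inductive step the key move is to upper-bound $q_{h+1}^t$ by evaluating the minimum in its definition at the convenient point $\mu=\mu^t(\cdot|x_{h+1}^t)$, which kills the dominant regularizer $(1/\eta^{t+1}(x_{h+1}^t))\breg_{x_{h+1}^t}(\mu,\mu^t(\cdot|x_{h+1}^t))$. What remains is the linear term $\scal{\tell_{h+1}^t}{\mu^t(\cdot|x_{h+1}^t)}$, which equals $\mu^t(a_{h+1}^t|x_{h+1}^t)\,\tell_{h+1}^t(a_{h+1}^t) \leq \tell_{h+1}^t(a_{h+1}^t)$ because $\tell_{h+1}^t$ is supported on the single visited action and is non-negative, plus the incremental Bregman term, which by the definitions of $\lambda_1$ and $C_\Psi$ is at most $\lambda_1 C_\Psi$. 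Applying the induction hypothesis yields
\[q_{h+1}^t \leq (1+\lambda_1 C_\Psi)\kappa(\mu^s|x_{h+1}^t) + \lambda_1 C_\Psi.\]

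To conclude, I plug this bound into $\tell_h^t(a_h^t) = (\ell_h^t + q_{h+1}^t)/\mu^s(a_h^t|x_h^t)$, use $\ell_h^t\leq 1$, and factor $(1+\lambda_1 C_\Psi)$, reducing the target inequality to
\[\frac{1 + \kappa(\mu^s|x_{h+1}^t)}{\mu^s(a_h^t|x_h^t)} \leq \kappa(\mu^s|x_h^t),\]
which is exactly \eqref{rec_kappa} applied at action $a_h^t$ since $x_{h+1}^t$ is among the successors of $(x_h^t,a_h^t)$ and all $\kappa$ quantities are non-negative. The only delicate point is the algebraic cancellation: the additive slack $+\lambda_1 C_\Psi$ from the Bregman increment recombines exactly with the $1$ coming from $\ell_h^t\leq 1$ to form the factor $(1+\lambda_1 C_\Psi)$, which then pairs cleanly with the induction hypothesis. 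Everything else is essentially bookkeeping.
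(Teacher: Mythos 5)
Your proof is correct and follows essentially the same route as the paper's: downward induction on $h$, bounding $q_{h+1}^t$ by evaluating the minimum at $\mu^t(\cdot\,|\,x_{h+1}^t)$ so that only the linear term and the $\lambda_1 C_\Psi$ Bregman increment survive, then closing the recursion via \eqref{rec_kappa} with the same factorization of $(1+\lambda_1 C_\Psi)$. Your explicit preliminary check that $q_h^t\ge 0$ (needed to justify $\scal{\tell_{h+1}^t}{\mu^t(\cdot\,|\,x_{h+1}^t)}\le \tell_{h+1}^t(a_{h+1}^t)$) is a detail the paper leaves implicit.
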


\begin{proof}
    The proof is done recursively on $h$, starting from the leaves. Indeed, for $h=H$, the property is immediate as $\tell_h^t(a_h^t)\leq 1/\mu^s(a_H^t|x_H^t)\leq \kappa(\mu^s|x_H^t)$. If we assume that the property holds for a depth $h>1$, then 
    \begin{align*}
        q_h^t&=\min_{\mu\in\Delta_{\cA(x_h^t)}}\scal{ \tell_h^t}{\mu}+\frac{1}{\eta^t(x_h^t)}\breg_x\pa{\mu,\mu^t(\cdot | x_h^t)}+\pa{\frac{1}{\eta^{t+1}(x_h^t)}-\frac{1}{\eta^t(x_h^t)}}\breg_x\pa{\mu,\mu^1(\cdot | x_h^t)}\\
        &\leq \scal{ \tell_h^t}{\mu^t(\cdot | x_h^t)}+\pa{\frac{1}{\eta^{t+1}(x_h^t)}-\frac{1}{\eta^t(x_h^t)}}\breg_x\pa{\mu^t(\cdot | x_h^t),\mu^1(\cdot | x_h^t)}\\
        &\leq \tell_h^t(a_h^t)+\lambda_1 C_\Psi\, .
    \end{align*}
    Then 
    \begin{align*}
        \tell_{h-1}^t(a_{h-1}^t)&=(\ell_{h-1}^t+q_h^t)/\mu^s(a_{h-1}^t|x_{h-1}^t)\\
        &\leq (1+\lambda_1 C_\Psi+\tell_h^t(a_h^t))/\mu^s(a_{h-1}^t|x_{h-1}^t)\\
        &\leq (1+\lambda_1 C_\Psi)(1+\kappa(\mu^s|x_h^t))/\mu^s(a_{h-1}^t|x_{h-1}^t)\\
        &\leq (1+\lambda_1 C_\Psi)\kappa(\mu^s|x_{h-1}^t)
    \end{align*}
    which concludes the induction.
    
\end{proof}

\begin{proof}
    We now prove the theorem. We start with the estimated regret, that we decompose between the penalty term and the stability term using theorem \ref{thm:regret_dilated}.
    
    \textit{Penalty term:} The penalty term $\mathrm{PEN}$ is bounded by
    \begin{align*}
        \mathrm{PEN}&\leq \sup_{\mu^\dagger\in\maxpi}\breg^\mathrm{\mathrm{dil}}_{\alpha^T}(\mu_{1:}^\dagger,\mu_{1:}^1)\\
        &\leq \sup_{\mu^\dagger\in\maxpi}\sum_{x\in\cX}\frac{1}{\eta^T(x)}\frac{\mu_{1:}^\dagger(x)}{\mu_{1:}^s(x)}\breg_x(\mu^\dagger(\cdot|x),\mu^1(\cdot|x))\\
        &\leq C_\Psi\lambda_2\sqrt{T} \sup_{\mu^\dagger\in\maxpi}\sum_{x\in\cX}\frac{\mu^\dagger_{1:}(x)}{\mu^s_{1:}(x)}\\
        &\leq C_\Psi\lambda_2\kappa(\mu^s)\sqrt{T}\, .
    \end{align*}

    \textit{Stability term:} For the stability term $\mathrm{STA}$, we rely on Lemma~\ref{lemma:regularized_bound} and the $1$-strong convexity of $\Psi_x$ with respect to $\norm{\cdot}_x$ (see Appendix~\ref{app:breg_properties}) and get

    \begin{align*}  \mathrm{STA}&=\sum_{t=1}^T\sum_{x\in\cX}\alpha^t(x)\mu_{1:}^t(x)\breg^\star_x\pa{\nabla\Psi_x(\mu_{1:}^t(\cdot | x))-\frac{1}{\alpha^t(x)}{\tell}^t(x,\cdot),\nabla\Psi_x(\mu_{1:}^t(\cdot | x))}\\
    &\leq \sum_{t=1}^T\sum_{x\in\cX}\frac{\mu_{1:}^t(x)}{\alpha^t(x)}\norm{\tell^t(x,\cdot)}^{\star^2}_x\\
    &\leq \sum_{t=1}^T\sum_{x\in\cX}\eta^t(x)\indic{x=x_h^t}\frac{\mu_{1:}^t(x)}{\mu_{1:}^s(x)}\norm{\tell_h^t}^{\star^2}_x\\
    &\leq \bra{C_\Psi^{\star}}^2\sum_{t=1}^T\sum_{x\in\cX}\eta^t(x)\indic{x=x_h^t}\frac{\mu_{1:}^t(x)}{\mu_{1:}^s(x)}\pa{\tell_h^t(a_h^t)}^2\\
    &\leq \bra{\pa{1+\lambda_1}C_\Psi C_\Psi^\star}^2\sum_{t=1}^T\sum_{x\in\cX}\eta^t(x)\indic{x=x_h^t}\frac{\mu_{1:}^t(x)}{\mu_{1:}^s(x)}\kappa(\mu^s|x)^2\\
    &\leq \bra{\pa{1+\lambda_1}C_\Psi C_\Psi^\star}^2\kappa(\mu^s)\sum_{t=1}^T\sum_{x\in\cX}\eta^t(x)\indic{x=x_h^t}\frac{1}{\mu_{1:}^s(x)}\kappa(\mu^s|x)\\
    &\leq \bra{\pa{1+\lambda_1}C_\Psi C_\Psi^\star\kappa(\mu^s)}^2\sum_{t=1}^T\sum_{x\in\cX}\eta^t(x)\indic{x=x_h^t}\\&
    \leq \bra{\pa{1+\lambda_1}C_\Psi C_\Psi^\star\kappa(\mu^s)}^2 \lambda_2\abs{\cX}\sqrt{T}\, .
    \end{align*}

    \textit{Conclusion: }By summing these two upper bounds we get
    \begin{align*}
        \hat{\regret}^T&\leq \bra{C_\Psi\lambda_2\kappa(\mu^s)+\bra{\pa{1+\lambda_1}C_\Psi C_\Psi^\star\kappa(\mu^s)}^2 \lambda_2\abs{\cX}}\sqrt{T}\\
        &\leq 2\bra{\pa{1+\lambda_1}C_\Psi C_\Psi^\star\kappa(\mu^s)}^2 \lambda_2\abs{\cX}\sqrt{T}\, .
    \end{align*}

    The bound is finally obtained using the  Theorem~\ref{thm:estimation} that holds with a probability of at least $1-\delta$ and links the estimated regret to the true regret.
    
\end{proof}

\end{document}